\let\chapter\undefined
\def\bf{\normalfont\bfseries}
\definecolor{darkgreen}{RGB}{40,150,70}
\newcommand{\URBIr}{\text{URBI($r$)}}
\newtheorem{claim}{Claim}
\theoremstyle{plain}
\newtheorem{theorem}{Theorem}
\newtheorem{corollary}{Corollary}
\newtheorem{fact}{Fact}
\theoremstyle{definition}
\newtheorem{definition}{Definition}
\newcommand{\ourrep}{}
\theoremstyle{remark}
\newtheorem{remark}{Remark}
\begin{document}

{\setstretch{1}
\title{%
\LARGE{%
Local Sufficiency for Partial Strategyproofness}\thanks{%
\scriptsize{%
Department of Informatics, University of Zurich, Switzerland,
email: \{mennle, seuken\}@ifi.uzh.ch.
For updates see www.ifi.uzh.ch/ce/publications/LSPSP.pdf. 
We would like to thank 
Utku \"{U}nver for helpful comments on this work.
Any errors remain our own.
Part of this research was supported by the Hasler Foundation under grant \#12078 and the SNSF (Swiss National Science Foundation) under grant \#156836.}}}
\author{%
Timo Mennle \\ University of Zurich
\and Sven Seuken \\ University of Zurich }
\date{First version: October 3, 2015 \\
This version: \today}
\maketitle


\begin{abstract}
In \citep{MennleSeuken2017PSP_WP}, we have introduced partial strategyproofness, a new, relaxed notion of strategyproofness, to study the incentive properties of non-strategyproof assignment mechanisms. 
In this paper, we present results pertaining to local sufficiency for partial strategyproofness: 
We show that, for any $r \in [0,1]$, $r$-local partial strategyproofness implies $r^2$-partial strategyproofness, and we show that this is the tightest polynomial bound for which a guarantee can be proven. 
Our results unify the two prior local sufficiency results for strategyproofness \citep{Carroll2012WhenAreLocalIncentiveConstraintsSufficient} and lexicographic dominance-strategyproofness \citep{Cho2012AxiomaticLocalVsGlobalSP}. 
\end{abstract}
\noindent \textbf{Keywords:}
Assignment,
Matching,
Strategyproofness, 
Lexicographic Dominance, 
Partial Strategyproofness,
Local Sufficiency

\medskip
\noindent\textbf{JEL:} %
%
%
C79, 
%
%
D47 
%
%
%
}

\section{Introduction}
%
We study \emph{assignment mechanisms}, which are procedures that assign indivisible objects to agents, taking into account the agents' preferences over objects but without the use of monetary transfers. 
Incentives for truthtelling play an important role in the research on such mechanism. 
\emph{Stochastic dominance strategyproofness} (\emph{SD-strategyproofness}) requires that a mechanism makes truthful reporting a dominant strategy for all agents, independent of their preference intensities. 
A weaker incentive concept is \emph{lexicographic dominance-strategyproofness} (\emph{LD-strategyproofness}), which requires that a mechanism makes truthful reporting a dominant strategy for those agents who have lexicographic preferences over the objects.\footnote{Informally, agents are said to have \emph{lexicographic preferences} if they prefer any (arbitrarily small) increase in their chance to obtain a more preferred object to any (even large) increase in their chances to obtain some less preferred objects. 
For example, an agent with lexicographic preferences would prefer to receive its first choice with a probability of $1\%$ or its third choice with a probability of $99\%$ to receiving its second choice for sure.} 
In \citep{MennleSeuken2017PSP_WP}, we have introduced \emph{partial strategyproofness}, a new, intermediate incentive concept that parametrizes the spectrum of incentive concepts between SD- and LD-strategyproofness. 

Under all three incentive concepts, agents are not restricted in the kinds of misreports that they may submit. 
Alternatively, one may suppose that agents only submit \emph{local} misreports that arise from their truthful preference order by inverting the order of two adjacently ranked objects. 
Restricting attention to local misreports gives rise to the notions of \emph{local SD-strategyproofness}, \emph{local LD-strategyproofness}, and \emph{local partial strategyproofness}. 
Obviously, each local incentive requirement is implied by its global counterpart. 
However, the question arises whether the opposite also holds. 
If local incentive constraints imply their global counterpart, we speak of \emph{local sufficiency}. 
This intriguing property can be used to greatly reduce the complexity of incentive concepts (see \citep{Carroll2012WhenAreLocalIncentiveConstraintsSufficient}. 
From a computational perspective, local sufficiency reduces algorithmic complexity because it reduces the number of constraints under the automated mechanism design paradigm \citep{Sandholm2003AutomatedMechanismDesign}. 

\citet{Carroll2012WhenAreLocalIncentiveConstraintsSufficient} and \citet{Cho2012AxiomaticLocalVsGlobalSP} proved local sufficiency for SD- and LD-strategyproofness, respectively. 
Thus, local incentive constraints are sufficient for the two limit concepts of partial strategyproofness. 
In this paper, we contribute local sufficiency results for partial strategyproofness. 
First, we prove that $r$-local partial strategyproofness always implies $r^2$-partial strategyproofness (for any $r \in [0,1]$). 
Second, we show that $r^2$ is the tightest polynomial bound for which such an implication can be guaranteed. 
Our results provide a unified proof for the two prior local sufficiency results and illustrate an interesting connection between local and global incentive constraints on the spectrum of incentive concepts between the two limit concepts.

\section{Formal Model}
\label{SEC:MODEL}
We use the same model as in \citep{MennleSeuken2017PSP_WP}: 
A \emph{setting} $(N,M,q)$ consists of
a set of \emph{agents} $N$ ($n=\#N$),
a set of \emph{objects} $M$ ($m=\#M$),
and a vector $q = (q_1,\ldots,q_m)$ of \emph{capacities}
(i.e., there are $q_j$ units object $j$ available).
We assume $n \leq \sum_{j \in M} q_j$ (i.e., there are not more agents than the total number of units); 
otherwise we include a dummy object with capacity $n$.
Each agent $i\in N$ has a strict \emph{preference order} $P_i$ over objects, where $P_i : a \succ b$ indicates that agent $i$ prefers object $a$ to object $b$.
Let $\mathcal{P}$ be the set of all possible preference orders.
A \emph{preference profile} $P = (P_i)_{i \in N}\in \mathcal{P}^N$ is a collection of preference orders from all agents and $P_{-i} \in \mathcal{P}^{N\backslash\{i\}}$ is a collection of preference orders of all agents except $i$.
We extend agents' preferences to lotteries via von Neumann-Morgenstern utility functions:
A \emph{utility function} $u_i : M \rightarrow \mathds{R}^+$ is \emph{consistent} with a preference order $P_i$ if $u_i(a) > u_i(b)$ whenever $P_i : a \succ b$, denoted $u_i \sim P_i$. 
$U_{P_i} = \{u_i ~|~ u_i \sim P_i\}$ denotes the set of all utility functions consistent with $P_i$.

A \emph{(random) assignment} is represented by an $n \times m$-matrix $x = (x_{i,j})_{i \in N, j\in M}$, where no object is assigned beyond capacity (i.e., $\sum_{i \in N} x_{i,j} \leq q_j$ for all $j \in M$) and each agent receives some object with certainty (i.e., $\sum_{j \in M} x_{i,j} = 1$ for all $i \in N$ and $x_{i,j} \geq 0$ for all $i \in N, j \in M$).
The value $x_{i,j}$ is the probability that agent $i$ gets object $j$.
An assignment $x$ is \emph{deterministic} if $x_{i,j} \in \{0,1\}$ for all $i\in N, j \in M$.
The $i^{\text{th}}$ row $x_i = (x_{i,j})_{j\in M}$ of $x$ is called the \emph{assignment vector} of $i$ (short: \emph{$i$'s assignment}).
The Birkhoff-von Neumann Theorem and its extensions \citep{Budishetal2013DesignRandomAllocMechsTheoryAndApp} ensure that for any random assignment we can find a lottery over deterministic assignments that implements its marginal probabilities. 
Finally, let $X$ and $\Delta(X)$ denote the spaces of all deterministic and random assignments, respectively.

A \emph{(random assignment) mechanism} is a mapping $ \varphi : \mathcal{P}^N \rightarrow \Delta(X)$ that selects an assignment based on a preference profile.
$\varphi_i(P_i,P_{-i})$ denotes the assignment of agent $i$ when $i$ reports $P_i$ and the other agents report $P_{-i}$.
The mechanism $\varphi$ is \emph{deterministic} if it selects deterministic assignments (i.e., $\varphi : \mathcal{P}^N \rightarrow X$).
Note that we only consider \emph{ordinal} mechanisms, where the assignment only depends on the reported preference profiles but is independent of the underlying utility functions.
If agent $i$ with utility function $u_i$ reports $P_i$ and the other agents report $P_{-i}$, then agent $i$'s expected utility is 
\begin{equation}
	\mathds{E}_{\varphi_i(P_i,P_{-i})}[u_i]
	= \sum_{j \in M} u_i(j) \cdot \varphi_{i,j}(P_i,P_{-i}).
\end{equation}
\section{Notions of Strategyproofness}
%
First, we define the standard notion of SD-strategyproofness. 
\begin{definition}[SD-Strategyproofness]
\label{DEF:SD_SP}
	For a preference order $P_i$ and two assignment vectors $x_i,y_i$, we say that \emph{$x_i$ stochastically dominates $y_i$ at $P_i$} if, for all objects $j\in M$, we have
	\begin{equation}
		\sum_{j'\in M\text{ s.t. }P_i:j'\succ j} x_{j'} \geq \sum_{j'\in M\text{ s.t. }P_i:j'\succ j} y_{j'}. 
	\end{equation}
	
	A mechanism $\varphi$ is \emph{stochastic dominance strategyproof} (\emph{SD-strategyproof}) if, 
	for all agents $i \in N$,
	all preference profiles $(P_i,P_{-i}) \in \mathcal{P}^N$,
	and all misreports $P_i' \in \mathcal{P}$, 
	$\varphi_i(P_i,P_{-i})$ stochastically dominates $\varphi_i(P_i',P_{-i})$ at $P_i$.	
\end{definition}
SD-strategyproofness can be equivalently defined in terms of expected utilities \citep{Erdil2014SPStochasticAssignment}: 
A mechanism is SD-strategyproof if and only if truthful reporting maximizes any agent's expected utility, independent of its preference order, its particular utility function, or the reports from the other agents. 

A second, weaker notion of strategyproofness requires that agents have a dominant strategy to report truthfully when they have lexicographic preferences over objects (i.e., they prefer any arbitrarily small increase in their chances to receive a more preferred object to any even large increase in their chance to receive any less preferred object). 
\begin{definition}[LD-Strategyproofness]
\label{DEF:LD_SP}
	For preference order $P_i\in \mathcal{P}$ and assignment vectors $x_i,y_i$, we say that \emph{$x_i$ lexicographically dominates} \emph{$y_i$ at $P_i$} if 
	either $x_i=y_i$, 
	or $x_{i,a}>y_{i,a}$ for some $a \in M$ and $x_{i,j} = y_{i,j}$ for all $j \in U(a,P_i) =\{j\in M ~|~ P_i:j \succ a\}$.

	A mechanism $\varphi$ is \emph{LD-strategyproof} if, 
	for all agents $i \in N$,
	all preference profiles $(P_i,P_{-i}) \in \mathcal{P}^N$, 
	and all misreports $P_i' \in \mathcal{P}$, 
	$\varphi_i(P_i,P_{-i})$ lexicographically dominates $\varphi_i(P_i',P_{-i})$ at $P_i$. 
\end{definition}
Obviously, LD-strategyproofness is implied by SD-strategyproofness but the opposite is not true. 

The third incentive requirement that we define is partial strategyproofness. 
Intuitively, a mechanism is partially strategyproof if it makes truthful reporting a dominant strategy for all agents who have sufficiently different values for any two different objects. 
Formally, this corresponds to strategyproofness on a particular domain restriction.
\begin{definition}[URBI]
\label{DEF:URBI}
A utility function $u_i$ satisfies \emph{uniformly relatively bounded indifference with respect to bound $r \in [0,1]$ (\URBIr)} if, 
for all objects $a,b\in M$ with $u_i(a) > u_i(b)$, we have
\begin{equation}
	r \cdot \left( u_i(a) -\min_{j \in M} u_i(j) \right) \geq u_i(b) -\min_{j \in M} u_i(j).
\label{EQ:URBI_CONSTRAINT}
\end{equation}
%
\end{definition}
\begin{definition}[Partially Strategyproof]
\label{DEF:PSP}
Given a setting $(N,M,q)$ and a bound $r \in [0,1]$,
a mechanism $\varphi$ is \emph{$r$-partially strategyproof} (\emph{in the setting $(N,M,q)$})
if, 
for all agents $i \in N$,
all preference profiles $(P_i,P_{-i}) \in \mathcal{P}^N$,
all misreports $P_i' \in \mathcal{P}$,
and all utility function $u_i \in U_{P_i} \cap \URBIr$,
we have
\begin{equation}
	\mathds{E}_{\varphi_i(P_i,P_{-i})}[u_i] \geq \mathds{E}_{\varphi_i(P_i',P_{-i})}[u_i].
\end{equation}
$\varphi$ is \emph{partially strategyproof} if it is $r$-partially strategyproof for some positive bound $r >0$.
\end{definition}
We have introduced partial strategyproofness in \citep{MennleSeuken2017PSP_WP}, where we have also shown that it is a meaningful relaxation of strategyproofness for assignment mechanisms. 
In particular, the \emph{degree of strategyproofness} (i.e., the value $r$) parametrizes the space of incentive requirements between SD-strategyproofness ($r=1$) and LD-strategyproofness ($r\searrow 0$). 
Thus, partial strategyproofness yields a spectrum of incentive concepts with SD- and LD-strategyproofness as upper and lower limit concepts. 

\medskip
To simplify an incentive requirement, mechanism designers may choose to consider only \emph{local} misreports and require truthful reporting to be a dominant strategy in this restricted strategy space. 
For the assignment domain, a natural notion of \emph{locality} arises when agents are limited to inverting the order of just one pair of consecutively ranked objects: 
For any preference order $P_i\in \mathcal{P}$, the \emph{neighborhood of $P_i$}, denoted $N_{P_i}$, consists of all preference orders that differ from $P_i$ by a swap of two consecutively ranked objects.\footnote{For example, $P_i' : b \succ a \succ c$ is in the neighborhood of $P_i: a \succ b \succ c$, but $P_i'': c\succ a \succ b$ and $P_i''': c\succ b \succ a$ are not.}
\begin{definition}[Local SD- \& Local LD-Strategyproofness]
A mechanism $\varphi$ is \emph{locally strategyproof} if, 
for all agents $i\in N$, 
all preference profiles $(P_i,P_{-i}) \in \mathcal{P}^N$, 
all misreports $P_i' \in N_{P_i}$ from the neighborhood of $P_i$, 
and all utility functions $u_i  \in U_{P_i}$ that are consistent with $P_i$, 
we have
\begin{equation}
	\mathds{E}_{(P_i,P_{-i})}[u_i] - \mathds{E}_{(P_i',P_{-i})}[u_i] \geq 0.
\end{equation}
$\varphi$ is \emph{locally LD-strategyproof} if $\varphi_i(P_i,P_{-i})$ lexicographically dominates $\varphi_i(P_i',P_{-i})$ for all agents $i$, preference profiles $(P_i,P_{-i})$, and local misreports $P_i' \in N_{P_i}$.
\end{definition}
Analogously, we can define a local variant of partial strategyproofness. 
\begin{definition}[Local Partial Strategyproofness]
\label{DEF:LOC_PSP}
Given a setting $(N,M,q)$ and a bound $r \in (0,1)$, 
a mechanism $\varphi$ is \emph{$r$-locally partially strategyproof} if, 
for all agents $i\in N$, 
all preference profiles $(P_i,P_{-i}) \in \mathcal{P}^N$, 
all misreports $P_i' \in N_{P_i}$ from the neighborhood of $P_i$, 
and all utility functions $u_i \in U_{P_i}\cap \URBIr$ that are consistent with $P_i$ and satisfy \URBIr, 
we have
\begin{equation}
	\mathds{E}_{\varphi_i(P_i,P_{-i})}[u_i] - \mathds{E}_{\varphi_i(P_i',P_{-i})}[u_i] \geq 0.
\end{equation}
We say that $\varphi$ is \emph{locally partially strategyproof} if it is $r$-locally partially strategyproof for some non-trivial $r > 0$.
\end{definition}
\section{Local Sufficiency}
Local incentive requirements, such as local SD-strategyproofness, are obviously not more demanding than their global counterparts. 
However, the question arises whether these concepts are strict relaxations or whether local incentive requirements are in fact sufficient to imply the respective global requirements. 
\citet{Carroll2012WhenAreLocalIncentiveConstraintsSufficient} and \citet{Cho2012AxiomaticLocalVsGlobalSP} proved local sufficiency for SD- and LD-strategyproofness, respectively, and Facts \ref{FACT:SP_LOCAL_SUFFICIENCY} and \ref{FACT:LD_LOCAL_SUFFICIENCY} summarize their results. 
\begin{fact}[\citeauthor{Carroll2012WhenAreLocalIncentiveConstraintsSufficient}, \citeyear{Carroll2012WhenAreLocalIncentiveConstraintsSufficient}]
\label{FACT:SP_LOCAL_SUFFICIENCY} 
Local SD-strategyproofness is sufficient for SD-strategyproofness.
\end{fact}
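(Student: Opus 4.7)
The plan is to establish a structural consequence of local SD-strategyproofness---call it \emph{swap monotonicity}---and then combine it with a bubble-sort argument to derive global SD-strategyproofness.

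First, I would prove the following lemma: if $\varphi$ is locally SD-strategyproof, then for any agent $i$, any profile $(P_i, P_{-i})$, and any $P_i' \in N_{P_i}$ obtained from $P_i$ by swapping two adjacent objects $a, b$ (with $P_i: a \succ b$), there exists $\lambda \geq 0$ such that $\varphi_{i,a}(P_i, P_{-i}) = \varphi_{i,a}(P_i', P_{-i}) + \lambda$, $\varphi_{i,b}(P_i, P_{-i}) = \varphi_{i,b}(P_i', P_{-i}) - \lambda$, and $\varphi_{i,j}(P_i, P_{-i}) = \varphi_{i,j}(P_i', P_{-i})$ for every $j \notin \{a, b\}$. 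The idea is to define $g(u) = \mathds{E}_{\varphi_i(P_i, P_{-i})}[u] - \mathds{E}_{\varphi_i(P_i', P_{-i})}[u]$, which is linear in $u$. Local SD-strategyproofness applied at $P_i$ forces $g(u) \geq 0$ on the open cone of utilities consistent with $P_i$, while applied at $P_i'$ (with $P_i \in N_{P_i'}$ as the misreport) it forces $g(u) \leq 0$ on the open cone consistent with $P_i'$. These two cones meet along the hyperplane $\{u_a = u_b\}$, so by continuity $g$ vanishes on an $(m-1)$-dimensional open subset of that hyperplane; being linear on all of $\mathds{R}^m$, $g$ must then be a scalar multiple of $(u_a - u_b)$, and the sign constraints force the scalar to be non-negative.

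Given the lemma, the rest is a clean bubble-sort argument. Fix any agent $i$, profile $(P_i, P_{-i})$, and misreport $P_i'$, and let $P_i' = Q^0, Q^1, \ldots, Q^K = P_i$ be a sequence of adjacent transpositions sorting $P_i'$ into $P_i$. Each step $Q^s \to Q^{s+1}$ swaps two objects $a, b$ with $Q^s: b \succ a$ and $Q^{s+1}: a \succ b$, so that the swap matches the $P_i$-order on $\{a, b\}$. Applying swap monotonicity, the transition leaves $\varphi_{i,j}$ unchanged for $j \notin \{a, b\}$, raises $\varphi_{i,a}$ by some $\lambda_s \geq 0$, and lowers $\varphi_{i,b}$ by $\lambda_s$. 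For any upper contour set $U$ of $P_i$, the fact that $P_i: a \succ b$ rules out the case $b \in U$, $a \notin U$; in the three remaining cases, $\sum_{j \in U} \varphi_{i,j}(Q^{s+1}, P_{-i}) \geq \sum_{j \in U} \varphi_{i,j}(Q^s, P_{-i})$. Telescoping over $s$ yields $\sum_{j \in U} \varphi_{i,j}(P_i, P_{-i}) \geq \sum_{j \in U} \varphi_{i,j}(P_i', P_{-i})$ for every upper contour set $U$ of $P_i$, which is exactly SD-dominance of $\varphi_i(P_i, P_{-i})$ over $\varphi_i(P_i', P_{-i})$ at $P_i$. The main obstacle is the swap monotonicity lemma itself: extracting the exact algebraic structure of the change in $\varphi_i$ from two opposing local inequalities (one at $P_i$, one at $P_i'$). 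Once the lemma is in hand, the bubble-sort telescoping is routine.
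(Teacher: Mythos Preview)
Your argument is correct. The paper itself does not give a standalone proof of Fact~\ref{FACT:SP_LOCAL_SUFFICIENCY}; it cites Carroll and then recovers the result as the $r=1$ special case of Theorem~\ref{THM:LOCAL_SUFFICIENCY_PSP}. So the relevant comparison is between your route and the paper's proof of Theorem~\ref{THM:LOCAL_SUFFICIENCY_PSP} specialized to $r=1$.

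The two proofs share the bubble-sort skeleton but differ in where the cancellation happens. You first extract an exact structural consequence of the two opposing local constraints (swap monotonicity: only $\varphi_{i,a}$ and $\varphi_{i,b}$ move, by $\pm\lambda$), and then telescope directly on upper-contour probability mass. The paper instead works entirely in utility space: it fixes the target utility $u$, draws a line segment $\text{co}(u,v)$ to a utility $v$ consistent with the misreport, and along this segment combines each pair of opposing local constraints (at $u_{\alpha^k}$ and $u_{\alpha^{k+1}}$) via a convex-combination identity to produce the single inequality $\mathds{E}_{\varphi(P^k)}[u]\le \mathds{E}_{\varphi(P^{k+1})}[u]$, which then telescopes. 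At $r=1$ both approaches are clean, and yours is arguably more transparent because swap monotonicity is a sharp, reusable lemma.

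The trade-off is generality. Your swap-monotonicity lemma hinges on the fact that the two open cones $U_{P_i}$ and $U_{P_i'}$ share the full hyperplane $\{u_a=u_b\}$ in their common boundary, forcing the linear functional $g$ to vanish there. For $r<1$ the relevant domains are $U_{P_i}\cap\URBIr$ and $U_{P_i'}\cap\URBIr$, and these do \emph{not} meet: the \URBIr\ constraint carves a gap around $\{u_a=u_b\}$, so the continuity step fails and swap monotonicity is simply false in that regime. The paper's line-segment argument sidesteps this by never asserting anything about $g$ on the indifference hyperplane; it only needs the segment to pass through each intermediate $U_{P^k}\cap\URBIr$, which is exactly what Claim~\ref{claim:r_squared_sufficiency} secures (at the price of the $r\mapsto r^2$ loss). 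So your proof is the natural one for SD-strategyproofness in isolation, while the paper's is the one that survives the move to partial strategyproofness.
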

\begin{fact}[\citeauthor{Cho2012AxiomaticLocalVsGlobalSP}, \citeyear{Cho2012AxiomaticLocalVsGlobalSP}]
\label{FACT:LD_LOCAL_SUFFICIENCY} 
Local LD-strategyproofness is sufficient for LD-strategyproofness.
\end{fact}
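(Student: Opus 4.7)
My plan is to prove Fact \ref{FACT:LD_LOCAL_SUFFICIENCY} by combining a two-sided local lemma with an invariance lemma and an induction on the number of objects $m$. Fix an agent $i$, a truthful preference $P_i$ with order $o_1 \succ o_2 \succ \cdots \succ o_m$, a misreport $P_i'$, and a profile $P_{-i}$; for brevity I suppress $P_{-i}$ and write $\varphi_i(Q)$ for $\varphi_i(Q, P_{-i})$. The goal is to show $\varphi_i(P_i)$ lex-dominates $\varphi_i(P_i')$ at $P_i$.

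The first ingredient is a \emph{two-sided local lemma}: if $Q, Q'$ are adjacent-swap neighbors swapping objects $a, b$ at positions $(l, l+1)$ of $Q$ (with $a$ above $b$ in $Q$), then local LD-strategyproofness applied at both $Q$ and $Q'$ forces $\varphi_i(Q)_j = \varphi_i(Q')_j$ for every object $j$ strictly above position $l$, and either $\varphi_i(Q) = \varphi_i(Q')$ in full or $\varphi_i(Q)_a > \varphi_i(Q')_a$ together with $\varphi_i(Q')_b > \varphi_i(Q)_b$. The proof is to compare the first-disagreement position in the $Q$- and $Q'$-orders: above position $l$ the two orders coincide, so any disagreement there would be forced in opposite directions by the two lex-dominance constraints, and the unique disagreement pattern at positions $(l, l+1)$ consistent with both constraints is the asserted one.

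The second ingredient is an \emph{invariance lemma}: whenever $P$ and $P'$ share the same top $k$ objects in the same order, $\varphi_i(P)_{o_l} = \varphi_i(P')_{o_l}$ for every $l \leq k$. To prove it I would connect $P$ and $P'$ by a sequence of adjacent swaps operating only within the bottom $m - k$ positions; each such swap is at positions $\geq k + 1$, and the two-sided lemma then preserves the probabilities of every object strictly above. A useful consequence obtained similarly is: for any $P, P'$ and any object $o$ ranked strictly higher in $P'$ than in $P$, $\varphi_i(P)_o \leq \varphi_i(P')_o$, by bubbling $o$ upward in $P$ and invoking the two-sided lemma at each step, with equality along the way forcing the full assignment to be preserved.

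The main proof then proceeds by induction on $m$. The base $m \leq 2$ is trivial since any misreport is itself an adjacent swap. For the inductive step, let $k^*$ be the length of the longest common top prefix of $P_i$ and $P_i'$; by invariance, $\varphi_i(P_i)_{o_l} = \varphi_i(P_i')_{o_l}$ for all $l \leq k^*$. If $k^* \geq 1$, I would pass to the sub-mechanism $\tilde\varphi_i$ on preferences over the bottom $m - k^*$ objects defined by fixing the common top-$k^*$ prefix; invariance makes $\tilde\varphi_i$ well-defined (up to the fixed top contribution), the two-sided lemma shows it inherits local LD-strategyproofness from $\varphi$, and the inductive hypothesis on $m - k^* < m$ objects yields LD-strategyproofness of $\tilde\varphi_i$ and hence lex-dominance at position $k^* + 1$ and below. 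If $k^* = 0$, I would first replace $P_i'$ by $P_i''$ obtained from $P_i'$ by bubbling $o_1 = P_i(1)$ to the top, use the upward-bubble consequence to deduce $\varphi_i(P_i')_{o_1} \leq \varphi_i(P_i'')_{o_1}$ with equality forcing $\varphi_i(P_i') = \varphi_i(P_i'')$, and then apply the $k^* \geq 1$ case to the pair $(P_i, P_i'')$. The main obstacle is the careful translation verifying that the sub-mechanism $\tilde\varphi_i$ inherits local LD-strategyproofness: one must lift each adjacent swap on the restricted preference to an adjacent swap on the full preference and check that the lex-dominance constraint on the full assignment descends cleanly to the bottom assignment once the invariant top contribution is subtracted.
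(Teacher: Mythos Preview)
Your proof is correct and self-contained. Note, however, that the paper does not supply its own direct proof of Fact~\ref{FACT:LD_LOCAL_SUFFICIENCY}: it is quoted as a result of Cho and invoked as a black box (for instance, in deriving Corollary~\ref{COR:WEAK_LOCAL_SUFFICIENCY_PSP}). The closest the paper comes to a proof is the observation that Theorem~\ref{THM:LOCAL_SUFFICIENCY_PSP} recovers Fact~\ref{FACT:LD_LOCAL_SUFFICIENCY} in the limit $r\searrow 0$. That argument lives in utility space: one connects a utility $u\in U_{P_i}$ to a carefully chosen $v\in U_{P_i'}$ by the straight segment $\text{co}(u,v)$, records the sequence of preference orders the segment traverses (each consecutive pair differing by a single adjacent swap), and writes the global incentive inequality as a telescoping sum of local ones along that path. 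Your route is genuinely different and purely combinatorial: the two-sided local lemma together with the invariance lemma lets you peel off a common top prefix and induct on the number of objects, never touching cardinal utilities at all. Your argument is more elementary and tailored specifically to the lexicographic case; the paper's path-in-utility-space machinery is heavier but buys the quantitative bound $r\mapsto r^2$ uniformly across the whole partial-strategyproofness spectrum, with LD-strategyproofness appearing only as the limiting endpoint.
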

Since local incentive constraints are obviously necessary for SD- and LD-strategyproofness, local sufficiency implies equivalence. 

We are now ready to formulate our local sufficiency results for partial strategyproofness. 
First, we observe that Fact \ref{FACT:LD_LOCAL_SUFFICIENCY} (in combination with other insights about partial strategyproofness) immediately yields a weak form of local sufficiency.
\begin{corollary}
\label{COR:WEAK_LOCAL_SUFFICIENCY_PSP}
Given a setting $(N,M,q)$, if a mechanism $\varphi$ is $r$-locally partially strategyproof for some $r>0$, then it is $r'$-partially strategyproof for some $r'>0$.
\end{corollary}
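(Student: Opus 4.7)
The plan is to route through LD-strategyproofness in three links, combining Fact~\ref{FACT:LD_LOCAL_SUFFICIENCY} with two equivalences between partial and LD-strategyproofness that are established in \citep{MennleSeuken2017PSP_WP}. The chain I would use is: $r$-local partial strategyproofness $\Rightarrow$ local LD-strategyproofness (first equivalence) $\Rightarrow$ LD-strategyproofness (Fact~\ref{FACT:LD_LOCAL_SUFFICIENCY}) $\Rightarrow$ $r'$-partial strategyproofness for some $r' > 0$ (second equivalence).

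For the first equivalence, the key observation is that $\URBIr$ always contains utility functions that are sufficiently lexicographic to witness any failure of lex-dominance, regardless of how large or small $r$ is. Concretely, suppose local LD-strategyproofness fails at $(P_i, P_{-i})$ with local misreport $P_i' \in N_{P_i}$, first differing position $a = c_k$ in the $P_i$-ordering $c_1 \succ \ldots \succ c_m$, and gap $\delta_k = \varphi_{i,a}(P_i',P_{-i}) - \varphi_{i,a}(P_i,P_{-i}) > 0$. I would construct a witness $u_i \in U_{P_i} \cap \URBIr$ by setting $u_i(c_j) = \epsilon^{j-k} - \epsilon^{m-k}$ for some small $\epsilon \in (0, \min(\delta_k, r))$. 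This $u_i$ is strictly decreasing along $P_i$, and its pairwise URBI inequalities reduce to $r \geq \epsilon^{l-j}$ for $j < l$, which holds since $\epsilon < r$. Because $\varphi_{i,c_j}(P_i,P_{-i}) = \varphi_{i,c_j}(P_i',P_{-i})$ for $j < k$ (by the choice of $a$ as the first differing position), a direct computation yields an expected-utility gap of at most $-\delta_k + \epsilon < 0$, contradicting $r$-local partial strategyproofness.

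For the second equivalence, I would use a compactness argument: since $(N,M,q)$ is finite, $\varphi$ produces only finitely many assignment vectors, so only finitely many positive lex-dominance gaps $\delta$ arise across all agents, profiles, and misreports. The minimum such gap $\delta_{\min} > 0$ is therefore well-defined, and any $r' \leq \delta_{\min}$ with $r' < 1$ suffices: using the URBI($r'$) bound $u_i(c_j) \leq r' u_i(a)$ (after shifting so $u_{\min} = 0$) for all objects $c_j$ less preferred than $a$, the expected-utility gap between truthful and any misreport is at least $u_i(a)(\delta - r') \geq 0$.

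I expect the main subtlety to lie in the first equivalence, specifically in verifying the URBI pairwise inequalities for the constructed $u_i$ and in bounding the residual sum $\sum_{j>k} \epsilon^{j-k}(\varphi_{i,c_j}(P_i,P_{-i}) - \varphi_{i,c_j}(P_i',P_{-i}))$ by $\epsilon$. The remaining two links are then a direct application of Fact~\ref{FACT:LD_LOCAL_SUFFICIENCY} and a routine compactness argument.
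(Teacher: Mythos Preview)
Your proposal is correct and follows exactly the same three-link chain as the paper's own argument: $r$-local partial strategyproofness $\Rightarrow$ local LD-strategyproofness $\Rightarrow$ LD-strategyproofness (Fact~\ref{FACT:LD_LOCAL_SUFFICIENCY}) $\Rightarrow$ $r'$-partial strategyproofness. The only difference is cosmetic: the paper delegates the first and third links to Theorem~4 of \citep{MennleSeuken2017PSP_WP}, whereas you sketch direct proofs of those implications; your sketches are essentially sound (the URBI check and the $u_i(a)(\delta - r')$ bound both go through once the minor details you flag are filled in).
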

Corollary \ref{COR:WEAK_LOCAL_SUFFICIENCY_PSP} follows from the observation that local partial strategyproofness implies local LD-strategyproofness (Theorem 4 in \citep{MennleSeuken2017PSP_WP}), 
which implies LD-strategyproofness (by Fact \ref{FACT:LD_LOCAL_SUFFICIENCY}), 
which in turn implies partial strategyproofness (again by Theorem 4 in \citep{MennleSeuken2017PSP_WP}). 
However, the local bound $r$ and the global bound $r'$ are not necessarily the same. 
Since $r'$-partial strategyproofness implies $r'$-local partial strategyproofness, we must have $r'\leq r$, but $r'$ may still be substantially smaller than $r$.
Our next result establishes a precise connection between $r$ and $r'$.
\begin{theorem}
\label{THM:LOCAL_SUFFICIENCY_PSP}
Given a setting $(N,M,q)$, if a mechanism $\varphi$ is $r$-locally partially strategyproof, then it is $r^2$-partially strategyproof.
\end{theorem}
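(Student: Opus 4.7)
The plan is to reduce the theorem to a combinatorial manipulation of the linear inequalities that characterise (local) $r$-partial strategyproofness. Fix an agent~$i$, a true preference~$P_i$ with order $a_1 \succ a_2 \succ \ldots \succ a_m$, an arbitrary misreport $P_i'$, and a utility function $u_i \in U_{P_i} \cap \URBI(r^2)$; write $\Delta_j = \varphi_{i,j}(P_i, P_{-i}) - \varphi_{i,j}(P_i', P_{-i})$. After normalising the worst-ranked object to utility zero, the cone $U_{P_i} \cap \URBI(r)$ is polyhedral with extreme rays $u^{(j)}(a_k) = r^{k-1} \mathds{1}[k \leq j]$ for $j \in \{1, \ldots, m-1\}$. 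As a consequence, $r$-partial strategyproofness at $(P_i, P_i')$ is equivalent to the family of inequalities $\sum_{k=1}^{j} r^{k-1} \Delta_{a_k} \geq 0$ indexed by~$j$, and $r^2$-partial strategyproofness is equivalent to the same family with $r$ replaced by $r^2$. The theorem therefore reduces to deriving, for every $j^\ast \in \{1, \ldots, m-1\}$ and every (global) misreport $P_i'$, the inequality $\sum_{k=1}^{j^\ast} r^{2(k-1)} \Delta_{a_k} \geq 0$ from the analogous inequalities applied only to local adjacent-swap misreports.

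Next I would decompose $P_i \to P_i'$ via a bubble-sort procedure into a minimum-length sequence of adjacent swaps $P_i = P_i^{(0)}, P_i^{(1)}, \ldots, P_i^{(K)} = P_i'$ chosen so that at every step $t$ the swap exchanges a pair of objects that is in its $P_i$-order (a so-called \emph{type (a)} swap, which reverses a pair currently inverted between $P_i^{(t)}$ and $P_i'$ and agrees with $P_i$). Two consequences of local $r$-partial strategyproofness drive the argument: first, \emph{upper invariance}---the allocation of objects strictly above the swapped pair in $P_i^{(t)}$ is unchanged---which follows by applying the local characterisation with both $P_i^{(t)}$ and $P_i^{(t+1)}$ taken as the true preference; and second, the step-wise inequalities $\sum_{l=1}^{j'} r^{l-1} \delta_{b_l^{(t)}}^{(t)} \geq 0$ for every $j'$, where $b_l^{(t)}$ is the $l$-th-preferred object under $P_i^{(t)}$ and $\delta_j^{(t)} = \varphi_{i,j}(P_i^{(t)}, P_{-i}) - \varphi_{i,j}(P_i^{(t+1)}, P_{-i})$. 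Together with the telescoping identity $\Delta_{a_k} = \sum_{t=0}^{K-1} \delta_{a_k}^{(t)}$, the proof reduces to exhibiting non-negative weights that combine the step-wise local inequalities into the target global inequality.

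The main obstacle I anticipate is constructing the weighting scheme so that the resulting global coefficients are exactly $r^{2(k-1)}$ rather than the naive $r^{k-1}$. My intuition for why two factors of~$r$ (and not one) arise is the following: an object $a_k$ (with $k \leq j^\ast$) contributes one factor of~$r$ each time it is involved in, or passed by, an adjacent swap on the bubble-sort path (from the local $r$-partial strategyproofness inequality at that step), while a second factor of~$r$ enters when the locally-indexed weights, which are stated in the intermediate ordering $P_i^{(t)}$, are translated back to the $P_i$-ordering, incurring an additional $r$-factor per shifted position. Formally, I expect the combinatorial bookkeeping to be handled by a nested induction---outer on $j^\ast$ and inner on the steps of the bubble sort---in which upper invariance eliminates the irrelevant cross-terms and the type-(a) structure aligns the remaining contributions into an $r^{2(k-1)}$-weighted sum at $P_i$. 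The sharpness statement of the theorem (that $r^2$ is the tightest polynomial bound) provides a useful sanity check: if the plan succeeds, no slack should remain in the argument.
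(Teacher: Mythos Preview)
Your architecture---bubble-sort decomposition into adjacent swaps, upper invariance, telescoping $\Delta_{a_k}=\sum_t \delta^{(t)}_{a_k}$---matches the paper's. But the entire content of the theorem lies in the step you label the ``main obstacle'' and leave open: producing non-negative weights that combine the step-wise local $r$-inequalities into the global $r^2$-inequality. Your heuristic for why two factors of $r$ appear (``one factor each time $a_k$ is involved in a swap, a second from re-indexing $P^{(t)}$ to $P_i$'') does not correspond to any workable mechanism: along the bubble-sort path the $P^{(t)}$-rank of $a_k$ can wander arbitrarily far from $k$, so ``one $r$ per shifted position'' would not give $r^{2(k-1)}$; and the number of swaps touching $a_k$ depends on $P_i'$, not on $k$. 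The nested induction you anticipate is not carried out, and I do not see how to carry it out along the lines you suggest.

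The paper avoids constructing combinatorial weights altogether by working geometrically with the given $u\in U_{P_i}\cap\URBI(r^2)$. It chooses a specific $v\in U_{P_i'}\cap\URBIr$ (essentially $v(j)=C^{\,m-\text{rank}_{P_i'}(j)}$ for large $C$) and shows that the segment $\text{co}(u,v)=\{u_\alpha=(1-\alpha)u+\alpha v\}$ (i) traverses exactly the bubble-sort sequence $P^0,\ldots,P^K$ and (ii) for each $k$ meets $U_{P^k}\cap\URBIr$ at some $u_{\alpha^k}$. Applying local $r$-PSP at $u_{\alpha^k}$ (truth $P^k$) and at $u_{\alpha^{k+1}}$ (truth $P^{k+1}$), a Carroll-style identity $\alpha^k u_{\alpha^{k+1}}-\alpha^{k+1}u_{\alpha^k}=(\alpha^k-\alpha^{k+1})\,u$ yields $\mathds{E}_{\varphi(P^k)}[u]\geq\mathds{E}_{\varphi(P^{k+1})}[u]$ for every $k$, and summing finishes. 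The $\URBI(r^2)$ hypothesis is used in exactly one place: when two \emph{consecutive} swaps in the bubble-sort path share a common object (the ``$b=d$, $P_i:c\succ a$'' case), the segment can re-enter $\URBIr$ after the first swap before it must leave for the second only because $u(a)/u(c)\leq r^2$. This shared-object case is precisely where the bound is tight, and it is invisible in your sketch; any successful proof must isolate and handle it.
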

%
%
We give the proof of Theorem \ref{THM:LOCAL_SUFFICIENCY_PSP} in Appendix \ref{APP:PROOF_SUFFICIENCY}.

Theorem \ref{THM:LOCAL_SUFFICIENCY_PSP} means that $r$-local partial strategyproofness is sufficient to guarantee $r'$-partial strategyproofness for any $r' \leq r^2$.
As a special case, we obtain that $1$-local partial strategyproofness implies $1$-partial strategyproofness, the local sufficiency result for strategyproofness (Fact \ref{FACT:SP_LOCAL_SUFFICIENCY}).
Furthermore, considering a sequence of bounds $(r_k)_{k\geq 1}$ that approaches $0$, we obtain the local sufficiency result for LD-strategyproofness in the limit (Fact \ref{FACT:LD_LOCAL_SUFFICIENCY}). 
Thus, Theorem \ref{THM:LOCAL_SUFFICIENCY_PSP} unifies both prior results. 

The question remains whether Theorem \ref{THM:LOCAL_SUFFICIENCY_PSP} is tight or whether the bound $r' \leq r^2$ can be improved in any way. 
First, note that it is straightforward to construct a counter-example to show that exact equality (i.e., $r' = r$, and therefore equivalence) is out of the question, unless $r \in \{0,1\}$.
In fact, as we show in the next Theorem \ref{THM:LOCAL_SUFFICIENCY_PSP:TIGHT}, the bound $r' = r^2$ is tight in the sense that `2' is the smallest exponent for which a universal guarantee can be given. 
%
\begin{theorem}
\label{THM:LOCAL_SUFFICIENCY_PSP:TIGHT}
Given a setting $(N,M,q)$ with $m \geq 4$ objects, for any $\varepsilon > 0$ there exists a bound $r \in (0,1)$ and a mechanism ${\varphi}$ such that 
\begin{enumerate}[1.]
	\setlength{\itemsep}{0pt}
	\item ${\varphi}$ is $r$-locally partially strategyproof, but
	\item ${\varphi}$ is \emph{not} $r^{2-\varepsilon}$-partially strategyproof.
\end{enumerate} 
\end{theorem}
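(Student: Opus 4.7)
My plan is to saturate the $r^2$-bound of Theorem~\ref{THM:LOCAL_SUFFICIENCY_PSP} by constructing, for any $\varepsilon > 0$, a mechanism $\varphi$ whose global partial-strategyproofness threshold lies arbitrarily close to $r^2$, and in particular strictly below $r^{2-\varepsilon}$. The intuition is that each local swap at a URBI$(r)$-extreme utility shifts probability mass at an exchange rate of at most $1/r$; chaining two URBI$(r)$-tight local swaps compounds this rate to $1/r^2$, precisely the global budget allowed by Theorem~\ref{THM:LOCAL_SUFFICIENCY_PSP}. If the mechanism's composed non-local change realizes this compounded budget, then any relaxation of the URBI parameter beyond $r^2$ produces a strictly positive utility gain.

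I would work in a single-agent setting with $m \geq 4$ objects (the extension to multi-agent settings by adding dummy agents with fixed reports is immediate). The mechanism $\varphi$ is specified on a two-swap chain $P_0 : a \succ b \succ c \succ d \succ \cdots$, $P_1 \in N_{P_0}$, and $P^* \in N_{P_1} \setminus \{P_0\}$ by three probability vectors $\varphi(P_0), \varphi(P_1), \varphi(P^*)$ that differ from a fixed SD-strategyproof baseline $\varphi^{\mathrm{base}}$ by $\delta$-scaled perturbations, with $\varphi(P) := \varphi^{\mathrm{base}}(P)$ for every off-chain preference $P$. Each perturbation is chosen from the intersection of the two polar cones of URBI$(r)$ at the endpoints of the corresponding swap, ensuring local PSP$(r)$ in both directions of each on-chain swap; within this two-sided feasibility region, I tune the direction of each perturbation via one free magnitude-ratio parameter so that the composed non-local change $D := \varphi(P^*) - \varphi(P_0) - (\varphi^{\mathrm{base}}(P^*) - \varphi^{\mathrm{base}}(P_0))$ attains an inner product with the URBI$(r')$-extreme utility of $U_{P_0}$ equal to $\delta \cdot f(r, r')$ for a low-degree polynomial $f$ whose positive root $r'^*$ can be pushed arbitrarily close to $r^2$ by the tuning. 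For any prescribed $\varepsilon > 0$, I then choose the free parameter so that $r'^* < r^{2-\varepsilon}$, which witnesses the failure of $r^{2-\varepsilon}$-partial strategyproofness at the URBI$(r^{2-\varepsilon})$-extreme utility of $U_{P_0}$.

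The main obstacles are twofold. First, the two-sided polar-cone constraints at each chain swap may, in the minimal $m=4$ case, be degenerate enough to bound $r'^*$ away from $r^2$ (a direct vertex computation shows the achievable range can lie above $r$ rather than near $r^2$); the remedy is to work with $m \geq 5$ or with a three-swap chain, where the additional polar-cone degrees of freedom provide the slack needed to push $r'^*$ arbitrarily close to $r^2$. Second, one must extend the perturbation to every off-chain preference while preserving local PSP$(r)$ for all off-chain local swaps: since $\varphi^{\mathrm{base}}$ is SD-strategyproof (hence locally PSP for any $r$) and the on-chain perturbation is $\delta$-scaled, the off-chain local constraints that were strict under $\varphi^{\mathrm{base}}$ remain strict under $\varphi$ for $\delta > 0$ small enough, by continuity. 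The bound $r \in (0,1)$ may be chosen freely (e.g., $r = 1/2$); the perturbation scale $\delta$, the chain length, and the number of objects are then adjusted as functions of $\varepsilon$ to secure the strict inequality $r'^* < r^{2-\varepsilon}$.
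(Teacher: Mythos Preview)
Your proposal outlines a reasonable heuristic---chain two URBI$(r)$-tight local swaps so that the compounded exchange rate saturates $r^2$---but as written it is a plan rather than a proof, and several steps you flag as ``obstacles'' are in fact genuine gaps that you have not closed.

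First, you yourself concede that for $m=4$ the two-swap perturbation of an SD-strategyproof baseline does \emph{not} drive the global threshold near $r^2$ (you report that ``a direct vertex computation shows the achievable range can lie above $r$ rather than near $r^2$''). Your proposed remedies---pass to $m\geq 5$ or to a three-swap chain---are asserted but not carried out. Since the theorem is stated for $m\geq 4$, the $m=4$ case must be handled; simply saying a longer chain should give more slack is not a proof. By contrast, the paper gives a fully explicit four-object mechanism (parametrized by $s=1/r$ and a single scalar $\alpha$), checks every local swap case by hand, and then exhibits the non-local deviation $a\succ b\succ c\succ d \rightsquigarrow c\succ a\succ b\succ d$ for which the $r^{2-\varepsilon}$-discounted partial sum is negative for all sufficiently large $s$. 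Nothing in your outline reaches this level of verification.

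Second, your claim that ``the bound $r\in(0,1)$ may be chosen freely (e.g., $r=1/2$)'' is not supported by your argument and is in fact stronger than what the theorem asserts or what the paper proves. In the paper's construction $r$ must be taken small as a function of $\varepsilon$ (the failure of $r^{2-\varepsilon}$-PSP is obtained only for sufficiently large $s=1/r$), and Remark~\ref{rem:details_r_r} explicitly leaves the fixed-$r$ question open. If your perturbation scheme really worked at $r=1/2$ for every $\varepsilon>0$, you would be proving something strictly stronger; but you give no calculation to back this, and your own admission that the two-swap polar-cone geometry is too rigid at $m=4$ suggests the opposite.

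Finally, the continuity argument for off-chain local constraints is incomplete: you need not only that the baseline satisfies them strictly, but that the strictness margin is uniform over the (finite) set of off-chain swaps \emph{and} that the on-chain perturbation does not introduce new binding constraints at swaps adjacent to the chain endpoints. This can be made to work, but it requires an explicit choice of baseline and an explicit bound on $\delta$, neither of which you supply.
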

Tightness by Theorem \ref{THM:LOCAL_SUFFICIENCY_PSP:TIGHT} means that $r' = r^2$ is the \emph{best polynomial bound} that allows a general statement about local sufficiency of the partial strategyproofness concept. 
We give the proof of Theorem \ref{THM:LOCAL_SUFFICIENCY_PSP:TIGHT} in Appendix \ref{APP:PROOF_TIGHT}.
%
\begin{remark}
\label{rem:details_r_r}
Observe that the value $r$ in the counter-examples in the proof of Theorem \ref{THM:LOCAL_SUFFICIENCY_PSP:TIGHT} may depend on $\varepsilon$.
We leave the exploration of the relationship between $r$ and $r'$ for \emph{fixed} $r$ to future research. 
\end{remark}



\appendix
\section*{Appendix}
\section{Proof of Theorem \ref{THM:LOCAL_SUFFICIENCY_PSP}}
\label{APP:PROOF_SUFFICIENCY}

\begin{proof}[Proof of Theorem \ref{THM:LOCAL_SUFFICIENCY_PSP}]
We must verify that an $r$-locally partially strategyproof mechanism $\varphi$ satisfies the conditions for $r^2$-partial strategyproofness, i.e., 
for all agents $i\in N$, 
all preference profiles $P = (P_i,P_{-i}) \in \mathcal{P}^N$, 
all misreports $P_i' \in \mathcal{P}$, 
and all utility functions $u_i  \in U_{P_i}$ with $u_i \in \text{URBI}(r^2)$, 
the inequality
\begin{equation}
	\mathds{E}_{\varphi_i(P_i,P_{-i})}[u_i] - \mathds{E}_{\varphi_i(P_i',P_{-i})}[u_i] \geq 0
	\label{eq:local_sufficiency:incentive_condition_for_verification}
\end{equation}
holds. 
Without loss of generality, we can assume that $\min_{j \in M} u_i(j) = 0$, since the manipulation incentives are exactly the same for an agent with utility function $\tilde{u}_i = u_i - \min_{j\in M} u_i(j)$. 

To simplify notation, we fix an arbitrary combination of agent $i$, preference profile $(P_i^{T},P_{-i})$, misreport $P_i^{F}$, and utility function $u_i \in U_{P_i^{T}} \cap \text{URBI}(r^2)$ to satisfy these preconditions. 
We drop the index $i$ on the preference orders, utility functions, and mechanism, and we omit the preferences of the other agents. 
With this simplification, inequality (\ref{eq:local_sufficiency:incentive_condition_for_verification}) becomes 
\begin{equation}
	\mathds{E}_{\varphi(P^{T})}[u] - \mathds{E}_{\varphi(P^{F})}[u] \geq 0
	\label{eq:local_sufficiency:incentive_condition_for_verification:simplified}
\end{equation}

Recall that $U_{P^{T}}$ denotes the set of utility functions that are consistent with $P^{T}$, i.e., 
\begin{equation}
	U_{P^{T}} = \left\{w : M \rightarrow \mathds{R}^+ ~|~ w \sim P^{T}\right\},
\end{equation}
and $U$ denotes the utility space, i.e., the union of all consistent utility functions 
\begin{equation}
	U = \bigcup_{ P \in \mathcal{P}} U_{ P}. 
\end{equation}
We say that a utility function $w : M \rightarrow \mathds{R}^+$ \emph{implies indifference} between two different objects $a,b \in M$ if $w(a) = w(b)$, and we denote by $W = \{w : M\rightarrow \mathds{R}^+\}$ the \emph{extended utility space}, i.e., the set of all possible utility functions, including those that imply indifference. 

Given the fixed preference order $P^{T}$ and consistent utility function $u \in U_{ P^{T}}$, let $v$ be a utility function that is consistent with the misreport $P^{F}$ and let
\begin{equation}
	\text{co}(u,v) = \left\{u_{\alpha} = (1-\alpha) u + \alpha v~|~\alpha \in [0,1]\right\}
\end{equation}
be the convex line segment in $W$ that connects $u$ and $v$. 
This line segment \emph{starts} in $U_{P^{T}}$, then (for increasing $\alpha$) traverses the extended utility space $W$ and eventually \emph{ends} at $v$ in $U_{P^{F}}$. 
$\text{co}(u,v)$ is said to \emph{pass} a preference order $P$ if, for some value $\alpha \in [0,1]$, we have that $u_{\alpha}$ is consistent with $P$, or equivalently, if $u_{\alpha} \in U_{P}$. 
By construction, $\text{co}(u,v)$ passes a sequence of preference orders $P^{T} = P^0, P^1, \ldots, P^{K-1},P^K =  P^{F}$ in this order, i.e., as $\alpha$ increases, $u_\alpha$ is first consistent with $P^0$, then with $P^1$, etc., until it is consistent with $P^K =  P^{F}$. 
Note that intermittently, it is possible that $u_{\alpha}$ is not consistent with any preference order as it may imply indifference. 
By linearity, we have that, for any two objects $a,b \in M$ with $u(a) > u(b)$ but $v(a) < v(b)$, there exists a \emph{unique} $\alpha \in (0,1)$ for which $u_{\alpha}$ implies indifference between $a$ and $b$, for any smaller $\alpha^- < \alpha$ we have $u_{\alpha^-}(a) > u_{\alpha^-}(b)$, and for any larger $\alpha^+ > \alpha$ we have $u_{\alpha^+}(a) < u_{\alpha^+}(b)$. 

We are now ready to formally define two requirements that we use in the proof:
\begin{itemize}
	\setlength{\itemsep}{0pt}
	\item We say that $\text{co}(u,v)$ makes \emph{no simultaneous transitions} if, for any three different objects $a,b,c \in M$, we have 
		\begin{equation}
			\text{co}(u,v) \cap \{w \in W ~ | ~ w(a) = w(b) = w(c)\} = \emptyset.
		\end{equation}
		In words, for no value of the parameter $\alpha$ does $u_{\alpha}$ imply indifference between all three objects $a,b,c$. 
		Intuitively, this means that two consecutive preference orders $P^k,P^{k+1}$ in the sequence $(P^0,\ldots,P^K)$ differ by exactly one swap of two consecutively ranked objects. 
	\item We say that $\text{co}(u,v)$ \emph{passes $P^{k}$ in \URBIr} if it passes $P^k$ and there exists some $\alpha^k \in [0,1]$ such that $u_{\alpha^k} \in U_{P^k} \cap \URBIr$. 
This means that the line segment contains at least one utility function that is consistent with $P^k$ and in addition satisfies \URBIr.
\end{itemize}
Suppose in the following that $u$ is consistent with $ P^{T}$ and satisfies $\text{URBI}(r^2)$, and that the mechanism $\varphi$ is $r$-locally partially strategyproof. 
\begin{claim}
\label{claim:r_squared_sufficiency}
There exists $v \in U_{ P^{F}} \cap \URBIr$ such that
\begin{enumerate}[i.]
	\setlength{\itemsep}{0pt}
	\item\label{claim:r_squared_sufficiency:i} $\text{co}(u,v)$ makes no simultaneous transitions,
	\item\label{claim:r_squared_sufficiency:ii} if $\text{co}(u,v)$ passes a preference order $ P$, then it passes $ P$ in \URBIr.
\end{enumerate}
\end{claim}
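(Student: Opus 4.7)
The plan is to prove Claim~\ref{claim:r_squared_sufficiency} by exhibiting a suitable $v \in U_{P^{F}} \cap \URBIr$ and verifying both (\ref{claim:r_squared_sufficiency:i}) and (\ref{claim:r_squared_sufficiency:ii}). Since $\URBI(r^2) \subsetneq \URBIr$, the set $U_{P^{F}} \cap \URBIr$ has non-empty interior, and I would choose $v$ in this interior with decay rates along $P^{F}$'s ranking strictly faster than the tight $\URBIr$ rate---for instance, $v_0(a_k^F) := \tau \cdot r^{2(k-1)}$ for $k \leq m-1$ and $v_0(a_m^F) := 0$, where $a_k^F$ denotes the object in position $k$ of $P^{F}$ and $\tau > 0$ is a scale parameter. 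Such a $v_0$ lies in $\URBI(r^2) \cap U_{P^{F}}$ and so provides the slack needed downstream in the analysis.

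Property (\ref{claim:r_squared_sufficiency:i}) is a genericity condition. For each unordered triple $\{a,b,c\}$ of objects, the requirement that $\text{co}(u,v)$ admits some $\alpha$ with $u_{\alpha}(a)=u_{\alpha}(b)=u_{\alpha}(c)$ simplifies---after eliminating $\alpha$ from the two equations $u_\alpha(a)=u_\alpha(b)$ and $u_\alpha(b)=u_\alpha(c)$---to a single linear equation in $v$, and therefore cuts out a hyperplane in the affine space of utility vectors. Avoiding all finitely many such hyperplanes is a generic condition, so a small generic perturbation of $v_0$ inside $U_{P^{F}} \cap \URBIr$ yields a $v$ satisfying (\ref{claim:r_squared_sufficiency:i}).

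For (\ref{claim:r_squared_sufficiency:ii}), recall that $\text{co}(u,v)$ passes a sequence $P^0 = P^{T}, P^1, \ldots, P^K = P^{F}$ of preference orders, each linked to its neighbor by the swap of two consecutively ranked objects. The boundary cases are immediate: $\alpha^0 := 0$ works since $u \in \URBI(r^2) \subseteq \URBIr$, and $\alpha^K := 1$ works by the construction of $v$. For intermediate $k$, the condition $\URBIr$ restricted to the ranking $P^k$ is a finite system of affine inequalities in $\alpha$, so the set of $\alpha$ simultaneously consistent with $P^k$ and lying in $\URBIr$ is itself an interval, and the task reduces to showing this interval is non-empty. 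I would proceed by induction on $k$, propagating a quantitative invariant on the shifted consecutive ratios of $u_{\alpha^k}$ across each local swap from $P^{k-1}$ to $P^k$.

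I expect the main obstacle to be this intermediate case. Just after a local swap exchanging two adjacent objects $a, b$, the shifted ratio for the newly-swapped pair in $P^k$ tends to $1$ and so violates $\URBIr$; to restore $\URBIr$, the parameter $\alpha$ must advance, which risks triggering the next swap before the offending ratio drops below $r$. The crucial quantitative input is the factor-of-$r$ slack that $u \in \URBI(r^2)$ affords over $\URBIr$, together with the analogous slack on $v \in \URBI(r^2)$: balancing these two slacks across the $K$ swaps should ensure that each of the $K-1$ intermediate $\URBIr$-intervals is non-empty. This is exactly the step that explains why the exponent ``$2$'' appears in ``$r^2$'' in Theorem~\ref{THM:LOCAL_SUFFICIENCY_PSP}.
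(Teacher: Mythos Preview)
Your approach shares the constructive strategy with the paper, but the specific construction and the argument for (\ref{claim:r_squared_sufficiency:ii}) differ substantially, and your part (\ref{claim:r_squared_sufficiency:ii}) has a genuine gap.

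The paper does not fix $v$ in $\URBI(r^2)$; it sets $v(j) = C^{\,m - \text{rank}_{P^{F}}(j)}$ and takes $C$ large. Large $C$ serves two purposes. First, it forces $\text{co}(u,v)$ to traverse the preference orders in the \emph{canonical} (bubble-sort) order, which automatically rules out simultaneous transitions and gives (\ref{claim:r_squared_sufficiency:i}) without any perturbation. Second, and crucially, the large-$C$ asymptotics are what drive (\ref{claim:r_squared_sufficiency:ii}): for swaps $a\leftrightarrow b$ followed by $c\leftrightarrow d$, the paper reduces the needed $\URBIr$-passage to an inequality of the form
\[
\frac{u(a)-r\,u(b)}{r\,u(c)-u(d)} \;\le\; \frac{r\,v(b)-v(a)}{v(d)-r\,v(c)},
\]
and then distinguishes two cases coming from the canonical-transition structure. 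When $P^{F}:b\succ d$, the right-hand side is of order $C$ and the inequality holds for large $C$ by brute force. Only in the degenerate case $b=d$ (the same object being bubbled up past both $a$ and $c$, with $P^{T}:c\succ a$) does the right-hand side stay bounded near $r$, and \emph{this} is the single place where $u \in \URBI(r^2)$ is invoked: it gives $u(a) \le r^{2} u(c)$, which forces the left-hand side to be at most $r$.

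Your intuition that the exponent $2$ comes from ``balancing slack at both endpoints'' therefore does not match what actually happens: the $r^{2}$-slack is used on the $u$ side only, while on the $v$ side the paper relies on ratios that can be made arbitrarily large, not on $v\in\URBI(r^2)$. With your $v$ fixed at decay $r^{2}$, the transition order need not be canonical, so you lose the clean two-case dichotomy above; in particular, the cases where the right-hand side blows up no longer do so, and you would need a genuinely different argument. Your proposal for (\ref{claim:r_squared_sufficiency:ii}) --- ``induction on $k$, propagating a quantitative invariant'' --- does not supply that argument: no invariant is stated, no inductive step is carried out, and it is unclear that a fixed $v\in\URBI(r^2)$ leaves enough room when several consecutive swaps share an object. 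Your genericity argument for (\ref{claim:r_squared_sufficiency:i}) is correct (eliminating $\alpha$ from two indifference equations does yield an affine condition on $v$), but it does not help with (\ref{claim:r_squared_sufficiency:ii}).
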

Using Claim \ref{claim:r_squared_sufficiency}, we can now show the inequality 
\begin{equation}
	\mathds{E}_{\varphi(P^{T})}[u] - \mathds{E}_{\varphi(P^{F})}[u] \geq 0.
\end{equation}
We will show this by writing the left side as a telescoping sum over local incentive constraints, where all but the first and the last term cancel out, such that it collapses to yield the inequality. 
This idea is inspired by the proof of local sufficiency for strategyproofness in \citep{Carroll2012WhenAreLocalIncentiveConstraintsSufficient}.

Consider the utility function $v$ constructed in Claim \ref{claim:r_squared_sufficiency} and the convex line segment $\text{co}(u,v)$. 
Let $\alpha^0 = 0$, $\alpha^K = 1$, and for each $k \in \{0,\ldots,K\}$ let $\alpha^k $ be the parameters for which $u_{\alpha^k} \in U_{P^k} \cap \URBIr$, which exist by statement \ref{claim:r_squared_sufficiency:ii} in Claim \ref{claim:r_squared_sufficiency}. 
For any $k \in \{0,\ldots,K-1\}$, the preference orders $P^k$ and $P^{k+1}$ are neighbors of each other, i.e., $P^k \in N_{P^{k+1}}$ and $P^{k+1} \in N_{P^{k}}$ (by statement \ref{claim:r_squared_sufficiency:i} in Claim \ref{claim:r_squared_sufficiency}). 
Thus, by $r$-local partial strategyproofness of $\varphi$, we obtain
\begin{equation}
	\mathds{E}_{\varphi(P^{k})}[u_{\alpha^{k}}] 
	- \mathds{E}_{\varphi(P^{k+1})}[u_{\alpha^{k}}] 
	\geq 0
\end{equation}
and 
\begin{equation}
	\mathds{E}_{\varphi(P^{k+1})}[u_{\alpha^{k+1}}] 
	- \mathds{E}_{\varphi(P^{k})}[u_{\alpha^{k+1}}] 
	\geq 0.
\end{equation}
Multiplying by $\alpha^k$ and $-\alpha^{k+1}$, respectively, and adding both inequalities yields
\begin{equation}
	\mathds{E}_{\varphi(P^{k+1})}[\alpha^k u_{\alpha^{k+1}} - \alpha^{k+1} u_{\alpha^k}]
	- 
	\mathds{E}_{\varphi(P^k)}[\alpha^k u_{\alpha^{k+1}} - \alpha^{k+1} u_{\alpha^k}]
	\geq 0.
	%
\end{equation}
Now, observe that $\alpha^k u_{\alpha^{k+1}} - \alpha^{k+1} u_{\alpha^k} = (\alpha^k - \alpha^{k+1}) \cdot u $ and, therefore, 
\begin{equation}
	\mathds{E}_{\varphi(P^{k+1})}[u]
	- \mathds{E}_{\varphi(P^{k})}[u] 
	\geq 0
\end{equation}
for all $k \in \{0,\ldots,K-1\}$. 
Summing over all $k$, we get 
\begin{equation}
	\mathds{E}_{\varphi(P^{T})}[u]
	- \mathds{E}_{\varphi(P^{F})}[u] 
	= 
	\sum_{k=0}^{K-1} 
	\mathds{E}_{\varphi(P^{k+1})}[u]
	- \mathds{E}_{\varphi(P^k)}[u] 
	\geq 0
	%
	%
\end{equation}

\medskip
We now proceed to prove Claim \ref{claim:r_squared_sufficiency}. 
\begin{proof}[Proof of Claim \ref{claim:r_squared_sufficiency}] 
The proof of existence of $v$ is constructive. 
For a preference order $P$, the \emph{rank} of an object $j$ under $ P$ is the position that $j$ holds in the ranking, i.e., 
\begin{equation}
	\text{rank}_{ P}(j) = \# \left\{ j'\in M~|~ P: j' \succ j \right\} + 1.
\end{equation}
Define $v:M\rightarrow \mathds{R}^+$ by setting
\begin{equation}
	v(j) = C^{m-\text{rank}_{ P^{F}}(j)}
\end{equation}
for any $j \in M$ and some $C > 1$ (so that $v \in U_{P^{F}}$). 
Furthermore, observe that $v \in \URBIr$ for sufficiently large $C$, since for any $a, b \in M$ with $P': a \succ b$, we have
\begin{equation}
	\frac{v(b)-\min_{j\in M} v(j)}{v(a) - \min_{j\in M} v(j)} = \frac{C^{\text{rank}_{ P^{F}}(b)} - 1}{C^{\text{rank}_{ P^{F}}(a)} - 1} = o\left(1/C\right).
\end{equation}
It remains to be shown that, for sufficiently large $C$, statements \ref{claim:r_squared_sufficiency:i} and \ref{claim:r_squared_sufficiency:ii} in Claim \ref{claim:r_squared_sufficiency} hold. 

To prove both statements, we use the concept of the \emph{canonical transitions}: 
For any two preference orders $P,P' \in \mathcal{P}$, a \emph{transition from $P'$ to $P$} is a finite sequence of preference orders $P^0,\ldots,P^K$ such that 
\begin{itemize}
	\setlength{\itemsep}{0pt}
	\item $P^0 = P'$ and $P^K = P$, 
	\item for all $k \in \{0,\ldots,K-1\}$, we have $P^k \in N_{P^{k+1}}$.
\end{itemize}
Intuitively, such a transition resembles a series of consecutive swaps that transform the preference order $P'$ into the preference order $P$. 
The \emph{canonical transition (from $P'$ to $P$)} is a particular transition that is inspired by the bubble-sort algorithm:  
Initially, we set $P^0=P'$. 
The preference orders $P^1,\ldots,P^K$ are constructed in \emph{phases}. 
In the first phase, we identify the highest ranking object under $P$ that is not ranked in the same position under $P'$, say $j$. 
Then, we construct the preference orders $P^1,P^2,\ldots$ by swapping $j$ with the respective next more preferred objects. 
When $j$ has reached the same position under $P^k$ as under $P$, the first phase ends. 
Likewise, at the beginning of the second phase, we identify the object that is ranked highest under $P$ of those objects that are not ranked in the same positions under $P^k$.  
New preference orders are constructed by swapping this object up to its final position under $P$. 
Subsequent phases are analogous. 
The construction ends when the most recently created preference order $P^K$ and $P$ coincides with $P$. 
 
In addition, we formalize \emph{transition times}: 
Suppose that, for two objects $a,b\in M$, we have $P^{T}: a\succ b$ but $P^{F}:b \succ a$, such that $u(a) > u(b)$ but $v(a) < v(b)$. 
Recall that in this case, there exists a unique parameter $\alpha $ for which $u_{\alpha}(a) = u_{\alpha}(b)$, for any smaller $\alpha^- < \alpha$ we have $u_{\alpha^-}(a) > u_{\alpha^-}(b)$, and for any larger $\alpha^+ > \alpha$ we have $u_{\alpha^+}(a) < u_{\alpha^+}(b)$. 
The line segment $\text{co}(u,v)$ pierces the hyperplane of indifference between $a$ and $b$ at the point $u_{\alpha}$, i.e., it transitions from preference orders that rank $a$ above $b$ to preference orders that rank $b$ above $a$. 
Formally, the \emph{transition time} $\alpha(a,b,1)$ is the parameter for which $u_{\alpha(a,b,1)}(a) = u_{\alpha(a,b,1)}(b)$. 
Extending this notation, we define $\underline{\alpha}(a,b,r)$ as the \emph{first} time when $u_{\alpha}$ violates the $\URBIr$ constraint for $a \succ b$, i.e., 
\begin{equation}
	\underline{\alpha}(a,b,r) = \inf \left\{\alpha \in [0,1]~\left|~\frac{u_{\alpha}(b) - \min_{j\in M} u_{\alpha}(j)}{u_{\alpha}(a) - \min_{j\in M} u_{\alpha}(j)}> r\right.\right\},
\end{equation}
and $\overline{\alpha}(b,a,r)$ as the \emph{last} time when $u_{\alpha}$ violates the $\URBIr$ constraint for $ b \succ a$, i.e., 
\begin{equation}
	\overline{\alpha}(a,b,r) = \sup \left\{\alpha \in [0,1]~\left|~\frac{u_{\alpha}(a) - \min_{j\in M} u_{\alpha}(j)}{u_{\alpha}(b) - \min_{j\in M} u_{\alpha}(j)}> r\right.\right\}.
\end{equation}
Obviously, 
\begin{equation}
	\underline{\alpha}(a,b,r) < \alpha(a,b,1) < \overline{\alpha}(b,a,r),
\end{equation}
i.e., as $\alpha$ increases, $u_{\alpha}$ violates $\URBIr$ for $a\succ b$ at some time, then subsequently it transitions from $a\succ b$ to $b\succ a$, and finally it no longer violates the $\URBIr$ constraint for $b \succ a$. 

We are now ready to formulate Claims \ref{claim:co_passes_canonical_transition}, \ref{claim:passing_in_urbi_r}, and \ref{claim:urbi_r_with_no_swap}, which are needed to establish statement \ref{claim:r_squared_sufficiency:i} (no simultaneous transitions) and statement \ref{claim:r_squared_sufficiency:ii} (passing all preference orders in $\URBIr$) in Claim \ref{claim:r_squared_sufficiency}, respectively, and the fact that the only relevant pairs of objects are those that are ranked differently under $ P^{T}$ and $ P^{F}$.

\begin{claim}
\label{claim:co_passes_canonical_transition} 
For sufficiently large $C$, $\text{co}(u,v)$ induces the canonical transition 
\begin{equation}
	P^0 =  P^{T}, P^1,\ldots,P^{K-1},P^K =  P^{F}.
\end{equation}
\end{claim}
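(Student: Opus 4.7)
The plan is to compute the transition times at which $\text{co}(u,v)$ crosses indifference hyperplanes, verify they are distinct and non-simultaneous for $C$ sufficiently large, and check that their order reproduces the canonical transition step by step. For each inverted pair $(a,b)$, i.e., a pair with $P^T: a\succ b$ and $P^F: b\succ a$, the segment crosses the $(a,b)$-indifference hyperplane at the unique time
\begin{equation*}
\alpha(a,b,1) = \frac{u(a)-u(b)}{u(a)-u(b) + v(b)-v(a)}.
\end{equation*}
Substituting $v(j) = C^{m-\text{rank}_{P^F}(j)}$ and using $\text{rank}_{P^F}(b) < \text{rank}_{P^F}(a)$, the denominator is dominated by $v(b) = C^{m-\text{rank}_{P^F}(b)}$, yielding the asymptotic
\begin{equation*}
\alpha(a,b,1) = (u(a)-u(b))\cdot C^{-(m-\text{rank}_{P^F}(b))}\cdot(1+o(1)) \quad\text{as } C\to\infty.
\end{equation*}

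From this expansion I would read off the order of swaps. Grouping inverted pairs by their lower object $b$ into sets $G_b$, transition times in different groups are separated by factors of $C$, so for $C$ sufficiently large all pairs in $G_{b_1}$ occur strictly before any pair in $G_{b_2}$ whenever $\text{rank}_{P^F}(b_1) < \text{rank}_{P^F}(b_2)$. Within $G_b$ the ordering is governed by the coefficients $u(a)-u(b)$, which are pairwise distinct (since $u$ is strictly consistent with $P^T$) and monotone in how far above $b$ the object $a$ sits in $P^T$. The same scaling argument rules out simultaneous pairwise transitions and three-way indifferences for $C$ large enough, so consecutive preference orders along $\text{co}(u,v)$ always differ by a single adjacent swap.

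It remains to identify the induced swap sequence with the canonical transition, which I would do by induction on the phase index. For the base case, note that an object lies in some inverted pair iff its ranks under $P^T$ and $P^F$ differ, so the minimal-$\text{rank}_{P^F}(b)$ group is precisely $G_{j_1}$; within it the first swap pairs $j_1$ with the $a$ minimizing $u(a)-u(j_1)$, i.e., the object immediately above $j_1$ in $P^T$, which is also the first bubble-sort move of Phase~$1$. For the inductive step I would show that after all swaps in $G_{j_1}$ have occurred, $\text{co}(u,v)$ is in the same state $S_1$ that Phase~$1$ of the canonical construction produces, and that all remaining inverted pairs $(a,b)$ retain their relative $P^T$-order in $S_1$; thus the group structure and the within-group ordering continue to reproduce bubble-sort on $S_1$. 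The most delicate point is this inductive step: one must verify that the downward shift Phase~$\ell$ applies to objects between $\text{rank}_{P^F}(j_\ell)$ and $\text{rank}_{P^T}(j_\ell)$ does not reorder any remaining group, which follows because all affected objects shift by exactly one position and therefore preserve their relative order.
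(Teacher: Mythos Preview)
Your approach is sound and ultimately parallel to the paper's: both compute the transition times $\alpha(a,b,1)$, establish that they are lexicographically ordered first by $\text{rank}_{P^F}(b)$ and then by the $P^T$-position of $a$, and identify this with the bubble-sort order. The paper packages the second step as an abstract equivalence (its Claim~\ref{claim:canonical_transition_equivalence}: $\text{co}(u,v)$ induces the canonical transition iff $\alpha(a,b,1)<\alpha(c,d,1)$ whenever $P^F:b\succ d$, or $b=d$ and $P^T:c\succ a$) and then checks that equivalence by exact algebra on the ratio $\frac{v(b)-v(a)}{v(d)-v(c)}$. You instead read the same ordering directly off the asymptotic $\alpha(a,b,1)\sim (u(a)-u(b))\,C^{-(m-\text{rank}_{P^F}(b))}$ and verify the match with bubble-sort by induction on phases. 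Your route is a bit more hands-on but avoids stating and proving a separate characterization lemma; the paper's route isolates the combinatorial content of ``canonical'' from the analytic computation.

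One concrete slip: the assertion ``an object lies in some inverted pair iff its ranks under $P^T$ and $P^F$ differ'' is false. Take $P^T:1\succ 2\succ 3$ and $P^F:3\succ 2\succ 1$; object $2$ keeps rank $2$ in both orders yet is the lower element of the inverted pair $(1,2)$. What you actually need for the base case---and what \emph{is} true---is that the lower element $b$ of smallest $P^F$-rank among all inverted pairs equals $j_1$. This holds because any $b$ with $\text{rank}_{P^F}(b)<\text{rank}_{P^F}(j_1)$ sits in the prefix where $P^T$ and $P^F$ coincide, so every object $P^T$-above $b$ is also $P^F$-above $b$ and $b$ cannot be a lower element. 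With this corrected justification your induction goes through; the remaining verifications (that each $G_{j_\ell}$ coincides with the set of swaps in Phase~$\ell$, and that the relative $P^T$-order of the upper elements survives earlier phases) are exactly as you sketch.
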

\begin{claim}
\label{claim:passing_in_urbi_r} 
For sufficiently large $C$, if $\alpha(a,b,1) < \alpha(c,d,1)$, then 
\begin{equation}
	\overline{\alpha}(a,b,r) \leq \underline{\alpha}(c,d,r).
\end{equation}
\end{claim}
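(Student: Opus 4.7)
The plan is to derive closed-form expressions for $\alpha(a,b,1)$, $\underline{\alpha}(a,b,r)$, and $\overline{\alpha}(a,b,r)$, extract their leading-order behavior as $C \to \infty$ using $v(j) = C^{m-\text{rank}_{P^F}(j)}$, and then argue by a case split on the $P^F$-ranks that the two $\URBI(r)$-failure windows cannot overlap. The tight case is resolved precisely by the $\URBI(r^2)$ hypothesis on $u$.

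Solving $u_\alpha(a) = u_\alpha(b)$ directly gives
\begin{equation*}
\alpha(a,b,1) = \frac{u(a) - u(b)}{(u(a) - u(b)) + (v(b) - v(a))},
\end{equation*}
and analogous algebra for the $\URBI(r)$-boundary equations yields, to leading order in $C$, $\overline{\alpha}(a,b,r) \approx (u(a) - r u(b))/(r v(b))$ and $\underline{\alpha}(c,d,r) \approx (r u(c) - u(d))/v(d)$. Writing $s_j = \text{rank}_{P^F}(j)$, each of the three quantities associated with the pair $(a,b)$ is of order $C^{-(m-s_b)}$, where $b$ is the higher-ranked object of the pair under $P^F$ (any swap pair has $s_b < s_a$).

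If $s_b \neq s_d$, the scalings of $\alpha(a,b,1)$ and $\alpha(c,d,1)$ are different powers of $C^{-1}$, and the hypothesis $\alpha(a,b,1) < \alpha(c,d,1)$ forces $s_b < s_d$. Then $\underline{\alpha}(c,d,r)$ dominates $\overline{\alpha}(a,b,r)$ by at least a factor of $C$, so the inequality holds for all sufficiently large $C$. If instead $s_b = s_d$, uniqueness of $P^F$-ranks forces $b = d$, and $\alpha(a,b,1) < \alpha(c,b,1)$ reduces asymptotically to $u(a) < u(c)$, i.e., $c \succ a$ under $P^T$. Substituting the leading-order expressions into $\overline{\alpha}(a,b,r) \leq \underline{\alpha}(c,b,r)$ and clearing denominators, the inequality collapses \emph{exactly} to $u(a) \leq r^2 u(c)$, which is the $\URBI(r^2)$ constraint of $u$ applied to the pair $(c,a)$. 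This is where the $r^2$-hypothesis is essential and where the sharp bound $r^2$ of Theorem \ref{THM:LOCAL_SUFFICIENCY_PSP} originates.

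The main technical obstacle is controlling the term $\min_j u_\alpha(j)$ in the $\URBI$ definition, a piecewise-linear concave function of $\alpha$ that may switch minimizers. I would show that at the relevant $\alpha = \Theta(C^{-(m-s_b)})$ the minimizer attains a $u_\alpha$-value of strictly lower order in $C$ than $u_\alpha(a)$ and $u_\alpha(b)$, so that $\min_j u_\alpha(j)$ is absorbed into error terms; in the residual configurations where this fails, the corresponding $\URBI(r)$ constraint becomes trivially satisfied because the pair in question collapses into the minimum, shrinking the failure window to a point. Either way, the leading-order comparison above carries through, yielding the claim for all sufficiently large $C$.
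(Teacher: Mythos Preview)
Your proposal is correct and follows essentially the same route as the paper: both derive closed-form (or leading-order) expressions for the critical parameters, perform the same case split ($P^{F}: b \succ d$ versus $b=d$), and in the tight case $b=d$ reduce the desired inequality precisely to $u(a) \leq r^{2}\,u(c)$, which is the $\text{URBI}(r^{2})$ constraint on $u$ for the pair $c \succ a$.

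The one place where the paper is cleaner is the handling of $\min_{j} u_{\alpha}(j)$. Rather than analyzing the minimum asymptotically, the paper observes that $u_{\alpha} \geq 0$ (since $u \geq 0$ by the normalization $\min_{j} u(j)=0$ and $v(j)=C^{\,m-\text{rank}_{P^{F}}(j)} \geq 1$), and that for $0 \leq m < x < y$ one has $(x-m)/(y-m) \leq x/y$. This immediately yields the exact sandwich
\[
\overline{\alpha}(a,b,r) \;\leq\; \alpha(b,a,r)
\quad\text{and}\quad
\alpha(c,d,r) \;\leq\; \underline{\alpha}(c,d,r),
\]
where $\alpha(\cdot,\cdot,r)$ solves the min-free equation $u_{\alpha}(\cdot)/u_{\alpha}(\cdot)=r$ and has the explicit form $\alpha(a,b,s) = \bigl(s\,u(a)-u(b)\bigr)\big/\bigl(s\,u(a)-u(b)+v(b)-s\,v(a)\bigr)$. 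Your leading-order expressions $(u(a)-r\,u(b))/(r\,v(b))$ and $(r\,u(c)-u(d))/v(d)$ are exactly the asymptotics of these min-free quantities, so in fact they already serve as one-sided bounds rather than mere approximations. With this observation your ``main technical obstacle'' disappears, and the rest of your argument goes through verbatim.
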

\begin{claim}
\label{claim:urbi_r_with_no_swap} 
If $P^{T}: a \succ b$ and $P^{F}: a \succ b$ and $u,v \in \URBIr$, then for all $\alpha \in [0,1]$
\begin{equation}
	\frac{u_{\alpha}(b)- \min_{j\in M} u_{\alpha}(j)}{u_{\alpha}(a)- \min_{j\in M} u_{\alpha}(j)} \leq r.
\end{equation}
\end{claim}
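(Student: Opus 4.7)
The plan is to exploit the fact that the URBI($r$) constraint, once rewritten as an affine upper bound on $w(b)$, is preserved under convex combinations up to a concavity correction involving the minimum. Concretely, for any utility function $w$ with $w(a) > w(b)$, I would first observe that URBI($r$) is equivalent to
\[
	w(b) \;\leq\; r \cdot w(a) + (1-r) \cdot \min_{j \in M} w(j).
\]
I would then apply this inequality to both $u$ and $v$ (both are in URBI($r$) and both rank $a$ above $b$ by hypothesis), multiply by $1-\alpha$ and $\alpha$ respectively, and sum to obtain
\[
	u_\alpha(b) \;\leq\; r \cdot u_\alpha(a) + (1-r)\bigl[(1-\alpha)\min_{j \in M} u(j) + \alpha \min_{j \in M} v(j)\bigr].
\]

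The only non-routine step is to replace the weighted average of minima on the right-hand side by $\min_{j \in M} u_\alpha(j)$. This follows from the elementary superadditivity of the minimum,
\[
	\min_{j \in M}\bigl[(1-\alpha) u(j) + \alpha v(j)\bigr] \;\geq\; (1-\alpha) \min_{j \in M} u(j) + \alpha \min_{j \in M} v(j),
\]
which holds because the minimizer of $u_\alpha$ need not coincide with the minimizer of $u$ or of $v$ individually. Combined with $1-r \geq 0$, this yields $u_\alpha(b) \leq r \cdot u_\alpha(a) + (1-r)\min_{j \in M} u_\alpha(j)$. Noting that $u(a) > u(b)$ and $v(a) > v(b)$ force $u_\alpha(a) > u_\alpha(b) \geq \min_{j \in M} u_\alpha(j)$, so the denominator is strictly positive, I would then rearrange to produce the claimed bound on the ratio.

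I do not anticipate any significant obstacle here. The argument is essentially a one-line calculation once the superadditivity of the minimum is invoked; the only subtlety to keep in mind is that the minimum operator does not commute with convex combinations of utility functions, which is precisely why a superadditivity inequality (rather than an equality) must appear in the argument.
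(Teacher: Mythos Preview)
Your argument is correct. The paper itself does not supply a proof of this claim; it simply declares it ``obvious'' and moves on, so your explicit derivation via the affine reformulation of the URBI($r$) constraint together with superadditivity of the minimum is a clean way to fill in what the authors left implicit.
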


Since $\text{co}(u,v)$ induces a \emph{transition} by Claim \ref{claim:co_passes_canonical_transition}, we already know that for all pairs $(a,b) \neq(c,d)$ we have $\alpha(a,b,1) \neq \alpha(c,d,1)$. 
Thus, $\text{co}(u,v)$ makes no simultaneous transitions. 

If $a$ is preferred to $b$ under both $ P^{T}$ and $ P^{F}$, then, by Claim \ref{claim:urbi_r_with_no_swap}, the $\URBIr$ constraint for $a$ over $b$ is satisfied for any $\alpha$. 
Suppose now that 
$P^{T}: a \succ b$, 
$P^{T}: c \succ d$, 
$P^{F}: b \succ a$, 
$P^{F}: d \succ c$, 
and $\alpha(a,b,1) < \alpha(c,d,1)$. 
Then $\text{co}(u,v)$ enters a new set of consistent utility functions $U_{P^k}$ at time $\alpha(a,b,1)$, where $P^{k}$ differs from $P^{k-1}$ by a swap of $a$ and $b$, 
and $\text{co}(u,v)$ leaves $U_{P^k}$ at time $\alpha(c,d,1)$, where $P^k$ differs from $P^{k+1}$ by a swap of $c$ and $d$. 
In this case, the $\URBIr$ constraint for $b$ over $a$ is satisfied after time $\overline{\alpha}(a,b,r) > \alpha(a,b,1)$, and the $\URBIr$ constraint for $c$ over $d$ is satisfied before time $\underline{\alpha}(c,d,r) < \alpha(c,d,1)$. 
Claim \ref{claim:passing_in_urbi_r} yields that the constraint for $c$ over $d$ holds long enough for the constraint for $b$ over $a$ to be restored. 
Thus, at any time $\alpha^k \in [\overline{\alpha}(a,b,r),\underline{\alpha}(c,d,r)] \neq \emptyset$, both constraints are satisfied. 
Iterated application of this argument yields that, for any $k \in \{0,\ldots,K\}$, there exists some $\alpha^k$ for which $u_{\alpha^k}$ satisfies $\URBIr$ with respect to preference order $P^k$. 

This concludes the proof of Claim \ref{claim:r_squared_sufficiency}.
\end{proof}

\medskip
We now provide the proofs of Claims \ref{claim:co_passes_canonical_transition} and \ref{claim:passing_in_urbi_r}. 
Claim \ref{claim:urbi_r_with_no_swap} is obvious. 
\begin{proof}[Proof of Claim \ref{claim:co_passes_canonical_transition}]
First, we formulate an equivalent condition for $\text{co}(u,v)$ to induce the canonical condition in terms of transition times $\alpha(a,b,1)$. 
%
%
\begin{claim}
\label{claim:canonical_transition_equivalence} 
The following are equivalent:
\begin{enumerate}
	\setlength{\itemsep}{0pt}
	\item \label{claim:canonical_transition_equivalence:1} $\text{co}(u,v)$ induces the canonical transition 
		\begin{equation}
			P^0 =  P^{T}, P^1,\ldots,P^{K-1},P^K =  P^{F}.
		\end{equation}
	\item \label{claim:canonical_transition_equivalence:2} For any $a,b,c,d\in M$ with $P^{T}: a \succ b$, $P^{T}: c \succ d$, $P^{F}: b \succ a$, $P^{F}: d \succ c$,
		\begin{enumerate}[i.]
			\setlength{\itemsep}{0pt}
			\item \label{claim:canonical_transition_equivalence:2:i} if $P^{F}: b \succ d$, then $\alpha(a,b,1) < \alpha(c,d,1)$,
			\item \label{claim:canonical_transition_equivalence:2:ii} if $b = d$  and $P^{T}: c \succ a$, then $\alpha(a,b,1) < \alpha(c,d,1)$.
		\end{enumerate}
\end{enumerate}
\end{claim}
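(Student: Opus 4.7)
My plan is to prove both directions of the equivalence by relating the algebraic ordering of the transition times $\alpha(\cdot,\cdot,1)$ to the combinatorial structure of the bubble-sort-inspired canonical transition. The central observation is that, when there are no simultaneous transitions, the sequence of preference orders visited by $\text{co}(u,v)$ is uniquely determined by the total order on swap pairs induced by increasing $\alpha$-values; likewise, the canonical transition prescribes a total order on the same set of swap pairs. Both orders coincide iff they agree on every pair, so the equivalence reduces to pairwise verification.

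For the direction (\ref{claim:canonical_transition_equivalence:1}) $\Rightarrow$ (\ref{claim:canonical_transition_equivalence:2}), I would read off the canonical structure: phases are indexed by the objects being bubbled up in decreasing order of their $P^{F}$-rank, and within a phase, the object $j$ being bubbled swaps in succession with the object currently just above it, starting from its $P^{T}$-position. For (i), if $P^{F}: b \succ d$, then the phase for $b$ precedes the phase for $d$, so the swap $(a,b)$ (inside $b$'s phase) occurs before the swap $(c,d)$ (inside $d$'s phase), which gives $\alpha(a,b,1) < \alpha(c,d,1)$. For (ii), if $b = d$ and $P^{T}: c \succ a$, then within the shared phase the common object $b$ bubbles past $a$ before passing $c$ (since $a$ is closer to $b$ in $P^{T}$), hence $\alpha(a,b,1) < \alpha(c,d,1)$.

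For the reverse direction (\ref{claim:canonical_transition_equivalence:2}) $\Rightarrow$ (\ref{claim:canonical_transition_equivalence:1}), I would show that conditions (i) and (ii) determine the relative order of any two distinct swap pairs $(X,Y)$ and $(X',Y')$ and that this order matches the canonical one. If $Y \neq Y'$, then (i) applies, after an appropriate relabeling if $P^{F}: Y' \succ Y$, and its verdict agrees with the canonical rule that the swap of the pair whose ``low'' element is $P^{F}$-higher comes first. If $Y = Y'$, then (ii) applies and its verdict agrees with the canonical rule that within a phase the bubbled object first encounters the $P^{T}$-lower of $X$ and $X'$. Since the pairwise orderings agree with the canonical order, so do the two total orders, which forces the sequence traversed by $\text{co}(u,v)$ to coincide with the canonical transition.

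I expect the principal obstacle to be the case analysis in the reverse direction, specifically when the swap pairs share an element other than the $Y$-coordinate (e.g.\ $X = X'$, $X = Y'$, or $Y = X'$). Each of these overlap patterns must be shown to fall under either (i) or (ii) after suitable relabeling, and one must also verify that no two swap pairs yield the same transition time (generic distinctness of $\alpha$-values), since otherwise the total order on swap pairs is not well-defined and the notion of ``inducing'' the canonical transition breaks down. Once these verifications are in place, the equivalence follows from the pairwise argument outlined above.
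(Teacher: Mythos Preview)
Your proposal is correct and follows essentially the same route as the paper: both directions reduce to reading off the bubble-sort phase structure (for $\Rightarrow$) and checking that conditions (i) and (ii) pin down the relative order of every pair of swaps (for $\Leftarrow$), with the dichotomy being $b\neq d$ versus $b=d$. Your anticipated ``obstacle'' dissolves: the overlap patterns $X=X'$, $X=Y'$, $Y=X'$ all force $Y\neq Y'$ and hence fall under (i) after relabeling, and distinctness of transition times is not a separate hypothesis but a consequence of the strict inequalities in (i) and (ii) themselves.
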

\begin{proof}[Proof of Claim \ref{claim:canonical_transition_equivalence}] 
First, we show sufficiency (``$\Rightarrow$''). 
To see that statement \ref{claim:canonical_transition_equivalence:2:i} holds, observe that, since $P^{F}: b\succ d$, $b$ will be brought up by bubble sort before $d$ is ever swapped up against another object. 
Since $P^{T}: c\succ d$, the swap of $c \leftrightarrow d$ is such a swap and, therefore, it has to occur after the swap $a \leftrightarrow b$.
Statement \ref{claim:canonical_transition_equivalence:2:ii} follows by observing that from $b=d$ and $P^{T}: c\succ a$ we get that $P^{T}: c \succ a \succ b$, but ultimately $P^{F}: b \succ (a,c)$. 
The bubble sort algorithm will bring $b$ up by swapping it with $a$ before it swaps $b$ and $c$. 

To see necessity (``$\Leftarrow$''), let $a \leftrightarrow b$ and $c \leftrightarrow d$ be two swaps that occur at $\alpha(a,b,1)$ and $\alpha(c,d,1)$, respectively. 
If $P^{F}: b \succ d$, then statement \ref{claim:canonical_transition_equivalence:2:i} implies that $a \leftrightarrow b$ occurs before $c \leftrightarrow d$, which is consistent with the canonical transition. 
By symmetry, the case $P^{F}:d \succ b$ also follows. 
Next, observe that any case not covered by this argument involves identity of $b$ and $d$, i.e., $b=d$. 
If $a = c$ as well, then there is nothing to show, so assume $P^{T}: a \succ c$, where \ref{claim:canonical_transition_equivalence:2:ii} implies the correct behavior. 
The last remaining case where $b = d$ and $P^{T}: c \succ a$ follows again by symmetry. 

This concludes the proof of Claim \ref{claim:canonical_transition_equivalence}.
\end{proof}
We now verify that the sequence of types through which $\text{co}(u,v)$ passes is indeed a canonical transition. 
Let $a,b,c,d \in M$ be such that $P^{T}: a \succ b$, $P^{T}: c \succ d$, $P^{F}: b \succ a$, $P^{F}: d\succ c$, and either $P^{F}: b \succ d$ (as in \ref{claim:canonical_transition_equivalence:2:i} of Claim \ref{claim:canonical_transition_equivalence}) or $b = d$ and $P^{T}: c\succ a$ (as in \ref{claim:canonical_transition_equivalence:2:ii} of Claim \ref{claim:canonical_transition_equivalence}).
We can write 
\begin{equation}
	\alpha(a,b,1) = \frac{u(a)-u(b)}{u(a)-u(b) + v(b)-v(a)} \text{~and~}\alpha(c,d,1) = \frac{u(c)-u(d)}{u(c)-u(d) + v(d)-v(c)},
\end{equation}
and we need to show that 
\begin{eqnarray}
	& & \alpha(a,b,1) < \alpha(c,d,1) \\
	& \Leftrightarrow & \left(u(a) - u(b)\right) \left( u(c)-u(d)+v(d)-v(c) \right) \\
	& & \hspace{1.6em} < \left(u(c) - u(d)\right) \left( u(a)-u(b)+v(b)-v(a) \right) \\
	& \Leftrightarrow & \left(u(a) - u(b)\right) \left( v(d)-v(c) \right) < \left(u(c) - u(d)\right) \left( v(b)-v(a) \right) \\
	&\Leftrightarrow & \frac{u(a)-u(b)}{u(c)-u(d)} < \frac{v(d)-v(c)}{v(b)-v(a)}. \label{eq:monotonicity_for_no_simul_transitions_last_condition}
\end{eqnarray}
If $P^{T}: b \succ d$, the left side of (\ref{eq:monotonicity_for_no_simul_transitions_last_condition}) grows faster than $C$, i.e., 
\begin{equation}
	\frac{v(d)-v(c)}{v(b)-v(a)} = \frac{C^{m - \text{rank}_{ P^{F}}(d)}-C^{m - \text{rank}_{ P^{F}}(c)}}{C^{m - \text{rank}_{ P^{F}}(b)}-C^{m - \text{rank}_{ P^{F}}(a)}} = \omega(C),
\end{equation}
since 
$\text{rank}_{ P^{F}}(b) < \text{rank}_{ P^{F}}(d)$, 
$\text{rank}_{ P^{F}}(b) < \text{rank}_{ P^{F}}(a)$, and
$\text{rank}_{ P^{F}}(d) < \text{rank}_{ P^{F}}(c)$. 
Similarly, if $b=d$ and $P^{T}: c \succ a$, we obtain that 
\begin{equation}
	\frac{v(d)-v(c)}{v(b)-v(a)} 
		= \frac{C^{m - \text{rank}_{ P^{F}}(b)}-C^{m - \text{rank}_{ P^{F}}(c)}}
			{C^{m - \text{rank}_{ P^{F}}(b)}-C^{m - \text{rank}_{ P^{F}}(a)}} 
		= \omega(C). 
\end{equation}
Since the right side in (\ref{eq:monotonicity_for_no_simul_transitions_last_condition}) is not small for sufficiently large $C$, we can ensure that $\alpha(a,b,1) < \alpha(c,d,1)$ whenever the statements \ref{claim:canonical_transition_equivalence:2:i} or \ref{claim:canonical_transition_equivalence:2:ii} in Claim \ref{claim:canonical_transition_equivalence} hold.
\end{proof}

\medskip
\begin{proof}[Proof of Claim \ref{claim:passing_in_urbi_r}]
First we define a conservative estimate for the violation times $\overline{\alpha}(a,b,r)$ and $\underline{\alpha}(c,d,r)$. 
Let 
\begin{equation}
	s(a,b,\alpha) = \frac{u_{\alpha}(b)}{u_{\alpha}(a)}
\end{equation}
and observe that $s(a,b,\alpha)$ is continuous and strictly monotonic in $\alpha$ and $s(a,b,\alpha(a,b,1)) = 1$. 
Thus, we can define the inverse $\alpha(a,b,s)$ for which $s(a,b,\alpha(a,b,s)) = s$ for any value of $s$ that is attained by $s(a,b,\alpha)$. 
In particular for $\alpha = 0$, $s(a,b,0) = \frac{u(b)}{u(a)} \leq r$ and for $\alpha = 1$, $s(a,b,1) = \frac{v(b)}{v(a)} > \frac{1}{r}$, so $\alpha(a,b,s)$ is well-defined for all values $s \in \left[r,\frac{1}{r}\right]$. 
In fact, we can solve
\begin{equation}
	\frac{u_{\alpha(a,b,s)}(b)}{u_{\alpha(a,b,s)}(a)} = s
\end{equation}
for $\alpha(a,b,s)$ and obtain the expression
\begin{equation}
	\alpha(a,b,s) = \frac{s u(a) - u(b)}{s u(a) - u(b) + v(b)-sv(a)}.
\end{equation}
Using $\min u_{\alpha} \geq 0$, 
\begin{equation}
		s(a,b,\alpha) = \frac{u_{\alpha}(b) }{u_{\alpha}(b) } \leq r 
\end{equation}
implies 
\begin{equation}
		\frac{u_{\alpha}(b) - \min_{j\in M} u_{\alpha}(j) }{u_{\alpha}(b) - \min_{j\in M} u_{\alpha}(j)} \leq r,
\end{equation}
and therefore, 
\begin{equation}
	 \overline{\alpha}(a,b,r) \leq \alpha(b,a,r)\text{ and }\alpha(c,d,r)\leq\underline{\alpha}(c,d,r).
\end{equation}
We now show that, for sufficiently large $C$, $\alpha(b,a,r) \leq \alpha(c,d,r)$ holds. 
Recall that we are considering objects $a,b,c,d \in M$, where $P^{T}: a \succ b$, $P^{T}: c\succ d$, $P^{F}: b \succ a$, and $P^{F}: d\succ c$, so that the required inequality can be rewritten equivalently as
\begin{equation}
	\alpha(b,a,r) \leq \alpha(c,d,r)  \Leftrightarrow \frac{u(a) - ru(b)}{ru(c) - u(d)} \leq \frac{rv(b)-v(a)}{v(d)-rv(c)}.
	\label{eq:equiv_condition_within_urbi_for_proof}
\end{equation}
By Claim \ref{claim:co_passes_canonical_transition}, $\text{co}(u,v)$ induces the canonical transition for sufficiently large $C$. 
Thus, by Claim \ref{claim:canonical_transition_equivalence}, $\alpha(a,b,1) < \alpha(c,d,1)$ holds if 
\begin{enumerate}[i.]
	\setlength{\itemsep}{0pt}
	\item \label{claim:passing_in_urbi_r:case:b_ultimately_pref_to_d} either $P^{F}: b \succ d$,
	\item \label{claim:passing_in_urbi_r:case:b_equal_d} or $b = d$ and $P^{T}: c \succ a$. 
\end{enumerate} 
In case \ref{claim:passing_in_urbi_r:case:b_ultimately_pref_to_d} we observe that the left side of (\ref{eq:equiv_condition_within_urbi_for_proof}) is constant, but the right side grows in $C$, i.e., it is in $\omega(C)$.
Therefore, (\ref{eq:equiv_condition_within_urbi_for_proof}) is ultimately satisfied for sufficiently large $C$. 

In case \ref{claim:passing_in_urbi_r:case:b_equal_d} the right side converges to $r$ (from below) as $C$ becomes large. 
Thus, it suffices to verify
\begin{eqnarray}
	& & \frac{u(a) - ru(b)}{ru(c) - u(b)} \leq r \\
	& \Leftrightarrow &  u(a) - ru(b) \leq r^2u(c) - ru(b) \\
	& \Leftrightarrow &  0 \leq r^2u(c) - u(a).
\end{eqnarray}
Using the assumption that $u$ satisfies $\text{URBI}(r^2)$, $\min_{j\in M} u(j) = 0$, and $P^{T}: c\succ a$, we get that 
\begin{equation}
	\frac{u(c)}{u(a)} \leq r^2~\Leftrightarrow~r^2u(c) - u(a) \geq 0. 
\end{equation}
This concludes the proof of Claim \ref{claim:passing_in_urbi_r}.
\end{proof}%
This concludes the proof of Theorem \ref{THM:LOCAL_SUFFICIENCY_PSP}.
%
%
\end{proof}

\section{Proof of Theorem \ref{THM:LOCAL_SUFFICIENCY_PSP:TIGHT}}
\label{APP:PROOF_TIGHT}

\begin{proof}[Proof of Theorem \ref{THM:LOCAL_SUFFICIENCY_PSP:TIGHT}]
Consider a mechanism $\varphi$ that selects the following assignments: 
\begin{eqnarray}
	\varphi(a\succ \ldots) & = & \left( \alpha, 0,0,1-\alpha \right), \\
	\varphi(b\succ \ldots) & = & \left( 0,\beta,0,1-\beta \right), \\
	\varphi(d\succ \ldots) & = & \left( 0,0,0,1 \right), \\
	\varphi(c\succ d \succ \ldots) & = & \left( 0,0,\gamma_c,1-\gamma_c \right), \\
	\varphi(c\succ a\succ d \succ b) & = & \left( 1-\gamma_c-\gamma_d,0,\gamma_c,\gamma_d \right), \\
	\varphi(c\succ b \succ \ldots) = \varphi(c\succ a\succ b \succ d) & = & \left( 1-\gamma_c-\gamma_d,\gamma_d,\gamma_c,0 \right)
\end{eqnarray}
for the objects $a,b,c,d$, respectively, where
\begin{eqnarray}
	\alpha,\beta,\gamma_c,\gamma_d & \in & [0,1], \\
	s & = & \frac{1}{r},\\
	\beta & = &  s \alpha, \\
	\gamma_c & = & \frac{(1-\alpha)}{\left(s-1\right)\left(s\left(s+1\right)-1\right)}, \\
	\gamma_d & = & \frac{s \left(s+1\right) (1-\alpha)}{s \left(s+1\right)-1}.
\end{eqnarray}
Observe that $\varphi$ is entirely specified by the values of $r$ and $\alpha$. 
We will now show that, for sufficiently small $r>0$, we can chose $\alpha$ such that 
\begin{enumerate}
	\setlength{\itemsep}{0pt}
	\item \label{item:construction:varphi:feasible} $\varphi$ is feasible,
	\item \label{item:construction:varphi:loc_psp} $\varphi$ is $r$-locally partially strategyproof,
	\item \label{item:construction:varphi:not_psp} but not $r^{2-\varepsilon}$-partially strategyproof. 
\end{enumerate}
First, we verify statement \ref{item:construction:varphi:feasible} that $\varphi$ is feasible. 
\begin{claim}
\label{claim:varphi_feasible}
For $s>1$, $\varphi$ is feasible if and only if $\alpha \in \left[\frac{s}{s^3-s+1},\frac{1}{s}\right]$.
\end{claim}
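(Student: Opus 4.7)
The plan is to unpack feasibility as the conjunction of the non-negativity and capacity constraints on each assignment vector, and then to show that for $s > 1$ only two of these constraints are binding, yielding the stated interval. Each of the six assignment vectors already sums to $1$ by construction, so feasibility reduces to verifying that all entries lie in $[0,1]$. From $\varphi(a \succ \ldots)$ we need $\alpha \in [0,1]$, from $\varphi(b \succ \ldots)$ we need $\beta = s\alpha \in [0,1]$ (giving the upper bound $\alpha \le 1/s$), $\varphi(d \succ \ldots)$ is trivially feasible, and the remaining three rows require $\gamma_c \in [0,1]$, $\gamma_d \in [0,1]$, and $1 - \gamma_c - \gamma_d \ge 0$.

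Next I would observe that for $s > 1$ and $\alpha \le 1/s \le 1$ both $\gamma_c$ and $\gamma_d$ are manifestly non-negative (numerators $\ge 0$, denominators $> 0$). The key step is then to identify $\gamma_c + \gamma_d \le 1$ as the binding lower constraint on $\alpha$. The plan is to compute
\begin{equation}
	\gamma_c + \gamma_d
	= \frac{1-\alpha}{s(s+1)-1}\left(\frac{1}{s-1} + s(s+1)\right)
	= \frac{(1-\alpha)\bigl(1 + s(s+1)(s-1)\bigr)}{(s-1)\bigl(s(s+1)-1\bigr)}
	= \frac{(1-\alpha)(s^3 - s + 1)}{s^3 - 2s + 1},
\end{equation}
where I use $s(s+1)(s-1) = s^3 - s$ and $(s-1)(s^2+s-1) = s^3 - 2s + 1$. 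Imposing $\gamma_c + \gamma_d \le 1$ rearranges to
\begin{equation}
	\alpha \;\ge\; 1 - \frac{s^3 - 2s + 1}{s^3 - s + 1} \;=\; \frac{s}{s^3 - s + 1},
\end{equation}
giving exactly the claimed lower bound.

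It remains to verify that the constraints $\gamma_c \le 1$ and $\gamma_d \le 1$ do not impose anything sharper. Since both $\gamma_c, \gamma_d \ge 0$, the inequality $\gamma_c + \gamma_d \le 1$ already implies each of them is at most $1$. Finally, I would confirm that the interval $\left[\frac{s}{s^3-s+1},\frac{1}{s}\right]$ is non-empty for $s>1$ by noting that $\frac{s}{s^3-s+1} \le \frac{1}{s}$ is equivalent to $s^3 - s^2 - s + 1 \ge 0$, which factors as $(s-1)^2(s+1) \ge 0$ and holds for all $s > 1$. This gives both directions of the equivalence.

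The proof is essentially a bookkeeping exercise; the only non-trivial step is the algebraic simplification of $\gamma_c + \gamma_d$, and the main obstacle will be making sure the polynomial identities $s(s+1)(s-1)+1 = s^3-s+1$ and $(s-1)(s^2+s-1) = s^3 - 2s + 1$ are applied correctly so that the lower threshold collapses neatly to $s/(s^3-s+1)$.
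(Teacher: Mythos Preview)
Your proposal is correct and follows essentially the same approach as the paper's proof: identify $\beta \le 1$ as the source of the upper bound $\alpha \le 1/s$, identify $\gamma_c + \gamma_d \le 1$ as the source of the lower bound, and verify non-emptiness of the interval. You simply make explicit the algebra (the closed form for $\gamma_c+\gamma_d$ and the factorization $(s-1)^2(s+1)$) that the paper leaves to the reader.
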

\begin{proof}[Proof of Claim \ref{claim:varphi_feasible}]
Note that for $s > 1$ and $\alpha < 1$, $\gamma_c$ and $\gamma_d$ are positive. 
We must ensure that $\beta = s \alpha \leq 1$, which is the case if and only if $\alpha \leq \frac{1}{s}$. 
Next, we give a condition for $\gamma_c+ \gamma_d \leq 1$, which in turn implies feasibility of the mechanism.
This inequality holds if and only if $\alpha \geq \frac{s}{s^3-s+1}$. 
Observing that $\frac{1}{s}>\frac{s}{s^3-s+1}$ for $s>1$, we have that the mechanism $\varphi$ is feasible if and only if $\alpha \in \left[\frac{s}{s^3-s+1},\frac{1}{s}\right] \neq \emptyset$.
\end{proof}

Second, we give equivalent conditions for $r$-local partial strategyproofness of $\varphi$, i.e., statement \ref{item:construction:varphi:loc_psp}. 
\begin{claim}
\label{claim:varphi_loc_psp}
For sufficiently small $r$, the following are equivalent:
\begin{enumerate}[i.]
	\setlength{\itemsep}{0pt}
	\item $\varphi$ is feasible and $r$-locally partially strategyproof,
	\item $\alpha \in I_s= \left[\frac{s^4 - s^3}{s^5+2s^4-s^2-s-1},\frac{s^3-s+\frac{s^2}{s-1}+1}{s^4+s^3-s^2+s+\frac{s^2}{s-1}}\right]$. 
\end{enumerate}
Furthermore, for sufficiently small $r>0$, $I_s \neq \emptyset$. 
\end{claim}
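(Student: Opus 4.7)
The plan is to verify $r$-local partial strategyproofness directly from the definition by enumerating every non-trivial local incentive constraint and then optimizing against the $\URBIr$ envelope. The mechanism $\varphi$ is constant on the class of preference reports sharing the same top-ranked object, with the sole exception that reports starting with $c$ are partitioned into the four sub-classes $c\succ d\succ\ldots$, $c\succ a\succ d\succ b$, $c\succ a\succ b\succ d$ (aliased to $c\succ b\succ\ldots$), and $c\succ b\succ\ldots$ itself. The local swaps that actually change the assignment are therefore only (i) swaps of the first two positions, which alter the top object, and (ii) a small number of swaps among positions $2,3,4$ of reports starting with $c$ that cross between these sub-classes. All other local swaps leave the allocation unchanged and so impose no constraint.

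For each non-trivial (preference, local misreport) pair, I would write the expected-utility difference as a linear functional in $u$ and minimize it over $u\in U_P\cap \URBIr$. After normalizing $\min_{j\in M}u(j)=0$, the $\URBIr$ constraint reduces to $u(y)\le r\cdot u(x)$ for every pair with $P: x\succ y$, and the adversarial utility is attained at an extreme point of this polytope, where one or more of these bounds holds with equality. Each local constraint then collapses to a linear inequality in $\alpha$, $\beta$, $\gamma_c$, $\gamma_d$ (with $s=1/r$), and the substitutions $\beta=s\alpha$, $\gamma_c=(1-\alpha)/((s-1)(s(s+1)-1))$, $\gamma_d=s(s+1)(1-\alpha)/(s(s+1)-1)$ reduce it further to a linear inequality in $\alpha$ alone. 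The parameter values in $\varphi$ appear engineered so that the top-swap constraint $a\leftrightarrow b$ holds with equality at the $\URBIr$ boundary (which explains the choice $\beta=s\alpha$), while the $c$-top deviations—in particular an agent with preference $c\succ a\succ b\succ d$ deviating to an $a$-top or $b$-top report, and the reverse deviation from $c\succ a\succ d\succ b$—reduce to the two binding inequalities that deliver the left and right endpoints of $I_s$; all remaining constraints should turn out to be slack.

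The step I expect to be most demanding is the bookkeeping: for each of the roughly dozen non-trivial local swaps, identifying the single extreme point of the $\URBIr$ polytope that produces the tightest worst-case, and then matching the resulting inequality either to one of the two stated endpoints of $I_s$ or to a slack constraint. For non-emptiness of $I_s$ when $r$ is small, both endpoints are of order $1/s$ as $s\to\infty$, so I would compute the next-order asymptotics of their difference and verify that the upper endpoint exceeds the lower; if the leading terms do not cancel, continuity yields $I_s\neq\emptyset$ on an interval $r\in(0,r_0)$, and the proof is complete.
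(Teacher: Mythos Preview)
Your plan is essentially the paper's own: enumerate every local swap that actually changes the assignment and reduce each to a linear inequality in $\alpha$. The paper does not carry out your direct minimization over $U_P\cap\URBIr$ but instead invokes Theorem~4 of \citep{MennleSeuken2017PSP_WP}, which replaces that minimization by checking nonnegativity of the adjusted partial sums
\[
\delta_k \;=\; \sum_{l=1}^{k} s^{\,k-l}\bigl(\varphi_{j_l}(P)-\varphi_{j_l}(P')\bigr),\qquad k=1,2,3,
\]
for $P:j_1\succ\ldots\succ j_4$. This is equivalent to your extreme-point approach (the worst case over the $\URBIr$ polytope is attained at utilities $u(j_l)\propto r^{\,l}$), but it packages the bookkeeping more uniformly across cases.

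Your guess about which constraints bind is off, however. The $a\leftrightarrow c$ top-swaps (your ``$c\succ a\succ b\succ d$ to an $a$-top report'' and its reverse) are slack for every $\alpha$; the paper verifies $\delta_1,\delta_2,\delta_3\ge 0$ unconditionally in those cases. Both endpoints of $I_s$ come instead from the $b\leftrightarrow c$ top-swap: the lower bound from $b\succ c\succ a\succ d \rightsquigarrow c\succ b\succ a\succ d$ (via $\delta_3\ge 0$), and the upper bound from $c\succ b\succ d\succ a \rightsquigarrow b\succ c\succ d\succ a$ (again via $\delta_3\ge 0$). Also, ``$c\succ a\succ b\succ d$ deviating to a $b$-top report'' is not a local swap at all. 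None of this is fatal, since you intend full enumeration anyway, but it changes where you should expect the work to be. Your asymptotic argument for $I_s\neq\emptyset$ is correct: the paper finds the lower endpoint $\sim \frac{1}{s+2}$ and the upper $\sim \frac{1}{s+1}$, so the interval is nonempty for large $s$ and hence for small $r$.
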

\begin{proof}[Proof of Claim \ref{claim:varphi_loc_psp}] 
We use Theorem 4 of \citep{MennleSeuken2017PSP_WP} to establish $r$-discounted dominance for any manipulation by just a swap, which in turn yields $r$-local partial strategyproofness. 
We only need to consider those swaps that lead to a change of the assignment, otherwise there is nothing to show. 
In the following, $\delta_k$ denotes the adjusted $k$th partial sum, i.e., for $P : j_1 \succ \ldots \succ j_m$, 
\begin{equation}
	\delta_k = \sum_{l=1}^k s^{k-l}\left(\varphi_{j_l}(P)-\varphi_{j_l}(P')\right) = r^{-k} \left(\sum_{l=1}^k r^{l}\left(\varphi_{j_l}(P)-\varphi_{j_l}(P')\right)\right).
\end{equation}
Observe that positivity of $\delta_1, \delta_2, \delta_3$ is equivalent to $r$-partial dominance of $\varphi_{j_l}(P)$ over $\varphi_{j_l}(P')$ at $P$ by Theorem 4 of \citep{MennleSeuken2017PSP_WP}. 
Table \ref{tbl:case_local_change_of_report} lists all the cases we need to consider. 
\begin{table}%
\begin{center}
\begin{tabular}{|l||c|c|c|c|c|c|c|c|c|}
	\hline
	\textbf{Preference report} & \textbf{I} & \textbf{II} & \textbf{III} & \textbf{IV} & \textbf{V} & \textbf{VI} & \textbf{VII} & \textbf{VIII} & \textbf{IX} \\
	\hline
	\hline
	$a\succ \ldots$ 																				& * & * & * & * &   &   &   &   &   \\
	\hline
	$b\succ \ldots$ 																				& * &   &   &   & * & * &   &   &   \\
	\hline
	$d\succ \ldots$ 																				&   & * &   &   & * &   & * &   &   \\
	\hline
	$c\succ d \succ \ldots$ 																&   &   &   &   &   &   & * & * &   \\
	\hline
	$c\succ a \succ d\succ b $															&   &   & * &   &   &   &   &   & * \\
	\hline
	$c \succ b\succ \ldots$ or $c \succ a \succ b \succ d$ 	&   &   &   & * &   & * &   & * & * \\
	\hline
\end{tabular}
\end{center}
\caption{Cases for local manipulations}
\label{tbl:case_local_change_of_report}
\end{table}
\begin{enumerate}[I.] 
	\setlength{\itemsep}{0pt}
	\item 
	\begin{itemize}
		\setlength{\itemsep}{0pt}
		\item $a \succ b \succ \ldots \rightsquigarrow b \succ a \succ \ldots$ : 
			\begin{eqnarray}
				\delta_1 & = &  \alpha \geq 0, \\
				\delta_2 & = & s \alpha - \beta = 0 \geq 0, \\
				\delta_3 & = & (1-\alpha) - (1-\beta) = \beta - \alpha \geq 0. 
			\end{eqnarray}
			For $\delta_3$, we assumed that the third choice is $d$, otherwise there is nothing to show. 
		\item $b \succ a \succ \ldots \rightsquigarrow a \succ b \succ \ldots$ : 
			\begin{eqnarray}
				\delta_1 & = &  \beta \geq 0, \\
				\delta_2 & = & s \beta - \alpha = \alpha(s^2-1) \geq 0, \\
				\delta_3 & = & s^2\beta - s \alpha + \alpha - \beta = \alpha (s^3 -2s +1) \geq 0. 
			\end{eqnarray}
			For $\delta_3$, we assumed that the third choice was $d$, otherwise there is nothing to show.
		\end{itemize}
	\item  $a \succ \ldots \leftrightsquigarrow d \succ \ldots$ : $\varphi(a\succ\ldots)$ first-order stochastically dominates $\varphi(d \succ \ldots)$ for all preference orders where $a$ is preferred to $d$, and vice versa. 
	\item \begin{itemize}
		\setlength{\itemsep}{0pt}
		\item $a \succ c \succ d \succ b \rightsquigarrow c \succ a \succ d \succ b$ : 
			\begin{eqnarray}
				\delta_1 & = &  \alpha - (1-\gamma_c - \gamma_d)  \\
					& = & \alpha - 1 + (1-\alpha)\left(\frac{(s-1)^{-1}+s(s+1)}{s(s+1)-1}\right) \geq 0,
			\end{eqnarray}
			since
			\begin{equation}
				(s-1)^{-1}+s(s+1)\geq s(s+1)-1~\Leftrightarrow~(s-1)^{-1}\geq -1. 
			\end{equation}
			\begin{eqnarray}
				\delta_2 & = & s(\alpha - 1 + \gamma_c + \gamma_d) - \gamma_c \\
					& = & (1-\alpha) s \left( \frac{(s-1)^{-1} +s(s+1) - (s-1)^{-1}s^{-1}}{s(s+1)-1}-1\right) \\
					& = & (1-\alpha) s \left( \frac{s(s+1)+s^{-1}}{s(s+1)-1}  - 1 \right) \geq 0, \\
				\delta_3 & = & s\delta_2 \geq 0. 
			\end{eqnarray}
		\item $c \succ a \succ d \succ b \rightsquigarrow a \succ c \succ d \succ b$ : 
			\begin{eqnarray}
				\delta_1 & = & \gamma_c \geq 0, \\
				\delta_2 & = & s \gamma_c +1 -\gamma_c - \gamma_d - \alpha \\
					& = & (a-\alpha) \left(1+ \frac{1-s(s+1)}{s(s+1)-1}\right) = 0 \\
				\delta_3 & = & \gamma_d-1+\alpha, \\
					& = & (1-\alpha)\left(\frac{s(s+1)}{s(s+1)-1} -1 \right) \geq 0.
			\end{eqnarray}		
		\end{itemize}
	\item \begin{itemize}
		\setlength{\itemsep}{0pt}
		\item $a \succ c \succ b \succ d \rightsquigarrow c \succ a \succ b \succ d$ : 
			\begin{eqnarray}
				\delta_1 & = &  \alpha - (1-\gamma_c - \gamma_d) \geq 0, \\
				\delta_2 & = & s(\alpha - 1 + \gamma_c + \gamma_d) - \gamma_c \geq 0, 
			\end{eqnarray}
			as in case III, and 
			\begin{eqnarray}
				\delta_3 & = & s^2(\alpha - 1 + \gamma_c + \gamma_d) - s \gamma_c + (1-\alpha) - \gamma_d \\
					& = & (1-\alpha)\left(1-s^2 + \frac{s + (s^2-1)s(s+1)}{s(s+1)-1}\right) \\
					& = & (1-\alpha)\left(\frac{s^2+s-1}{s(s+1) -1}\right) = 1-\alpha \geq 0. 
			\end{eqnarray}
		\item $c \succ a \succ b \succ d \rightsquigarrow a \succ c \succ b \succ d$ : 
			\begin{eqnarray}
				\delta_1 & = & \gamma_c \geq 0 \\
				\delta_2 & = & s \gamma_c + (1-\gamma_c-\gamma_d) - \alpha \\
					& = & (1-\alpha)\left( 1+\frac{\frac{s}{s-1} - \frac{1}{s-1} - s(s+1)}{s(s+1)-1} \right) \\ 
					& = & (1-\alpha)\left( 1+\frac{1 - s(s+1)}{s(s+1)-1} \right) = (1-\alpha)(1-1) = 0 \geq 0, \\ 
				\delta_3 & = & 0 + \gamma_d \geq 0.
			\end{eqnarray}
		\end{itemize}
	\item $b \succ \ldots \leftrightsquigarrow d \succ \ldots$ : $\varphi(b\succ\ldots)$ first-order stochastically dominates $\varphi(d \succ\ldots)$ for all preference orders where $b$ is preferred to $d$, and vice versa. 
	\item \begin{itemize}
		\setlength{\itemsep}{0pt}
		\item $b \succ c \succ \ldots \rightsquigarrow c \succ b \succ \ldots$ : 
			We begin with $\delta_3$ as its positivity also implies positivity of $\delta_1$ and $\delta_2$. 
			Furthermore, the strictest condition arises from the preference order $b\succ c \succ a \succ d$.
			\begin{eqnarray}
				\delta_3 & = & s^2(\beta - \gamma_d) + s (-\gamma_c) + (-1+\gamma_c+\gamma_d) \\
					& = & \alpha \left(\frac{s^5+2s^4-s^2-s-1}{s(s+1)-1}\right) - \frac{s^4 - s^3}{s(s+1)-1} \geq 0
			\end{eqnarray}
			holds if and only if 
			\begin{equation}
				\alpha \geq \frac{s^4 - s^3}{s^5+2s^4-s^2-s-1}.
			\end{equation}
		\item $c \succ b \succ \ldots \rightsquigarrow b \succ c \succ \ldots$ : 
			\begin{eqnarray}
				\delta_1 & = & \gamma_c \geq 0.
			\end{eqnarray}
			We can consider the case where $d$ is the third choice as this condition is stirctly stronger than if $a$ is the third choice. 
			It suffices to consider $\delta_3$ as its positivity implies positivity of $\delta_2$.
			\begin{eqnarray}
				\delta_3 & = & s^2\gamma_c + s \gamma_d - s \beta - (1-\beta) \\
					& = & \alpha\left(\frac{-s^4-s^3+s^2-s-\frac{s^2}{s-1}}{s(s+1)-1}\right) + \left(\frac{s^3-s+\frac{s^2}{s-1}+1}{s(s+1)-1}\right) \geq 0 
			\end{eqnarray}
			holds if and only if 
			\begin{equation}
				\alpha \leq \frac{s^3-s+\frac{s^2}{s-1}+1}{s^4+s^3-s^2+s+\frac{s^2}{s-1}}.
			\end{equation}
		\end{itemize}
	\item $d \succ c \succ \ldots \leftrightsquigarrow c \succ d \succ \ldots$ : 
			$\varphi(d\succ\ldots)$ first-order stochastically dominates $\varphi(c \succ d \succ \ldots)$ for all preference orders where $d$ is preferred to $c$, and vice versa. 
	\item \begin{itemize}
		\setlength{\itemsep}{0pt}
		\item $c \succ d \succ b \succ a \rightsquigarrow c \succ b \succ d \succ a$ : 
			\begin{eqnarray}
				\delta_1 & = & \gamma_c-\gamma_c \geq 0, \\
				\delta_2 & = & 1-\gamma_c - 0 \geq 0, \\
				\delta_3 & = & s(1-\gamma_c) + \gamma_d \geq 0.
			\end{eqnarray}
		\item $c \succ b \succ d \succ a \rightsquigarrow c \succ d \succ b \succ a$ : 
			\begin{eqnarray}
				\delta_1 & = & \gamma_c-\gamma_c \geq 0, \\
				\delta_2 & = & \gamma_d - 0 \geq 0, \\
				\delta_3 & = & s\gamma_d -(1-\gamma_c) \\
					& = & \alpha\left( \frac{-s^2(s+1)-\frac{1}{s-1}}{s(s+1)-1}\right) + \left( \frac{s^2(s+1) - \frac{1}{s-1} - s(s+1)+1}{s(s+1)-1}\right),
			\end{eqnarray}
			which is positive if and only if 
			\begin{equation}
				\alpha \leq \frac{s^3-s+1-\frac{1}{s-1}}{s^3+s^2-\frac{1}{s-1}}.
			\end{equation}
		\end{itemize}
	\item \begin{itemize}
		\setlength{\itemsep}{0pt}
		\item $c \succ a \succ d \succ b \rightsquigarrow c \succ a \succ b \succ d$ : 
			\begin{eqnarray}
				\delta_1 & = & \gamma_c-\gamma_c \geq 0, \\
				\delta_2 & = & 1-\gamma_c-\gamma_d - 1 +\gamma_c+\gamma_d \geq 0, \\
				\delta_3 & = & \gamma_d \geq 0.
			\end{eqnarray}
		\item $c \succ a \succ b \succ d \rightsquigarrow c \succ a \succ d \succ d$ : 
			\begin{eqnarray}
				\delta_1 & = & \gamma_c-\gamma_c \geq 0, \\
				\delta_2 & = & 1-\gamma_c-\gamma_d - 1 +\gamma_c+\gamma_d \geq 0, \\
				\delta_3 & = & \gamma_d \geq 0.
			\end{eqnarray}		
		\end{itemize}
\end{enumerate}
In summary, all local incentive constraints are satisfied if and only if 
\begin{equation}
	\frac{s^4 - s^3}{s^5+2s^4-s^2-s-1}
	\leq
	\alpha 
	\leq
	\min\left\{
		\frac{s^3-s+1-\frac{1}{s-1}}{s^3+s^2-\frac{1}{s-1}},
		\frac{s^3-s+\frac{s^2}{s-1}+1}{s^4+s^3-s^2+s+\frac{s^2}{s-1}}
	\right\}.
\end{equation}
The stronger upper bound is the second: 
Asymptotically, as $s$ grows, it behaves like $\frac{1}{s+1}$, which converges to 0, while the first bound converges to 1. 
The stronger upper bound is also stronger than the upper bound for feasibility, since $\frac{1}{s+1}$ is smaller than $\frac{1}{s}$
The lower bound behaves like $\frac{1}{s+2}$, which is greater than $\frac{1}{s^2-1}$, the asymptotic of the lower bound for feasibility. 
Finally, observe that the lower bound behaves like $\frac{1}{s+2}$, which is strictly less than the asymptotic of the upper bound $\frac{1}{s+1}$. 
Thus, for sufficiently large $s$, $\alpha$ can be chosen such that $\varphi$ is $r$-locally partially strategyproof, which in turn implies feasibility. 

This concludes the proof of Claim \ref{claim:varphi_loc_psp}.
\end{proof}

It remains to be shown that, for given $\varepsilon>0$, there exist $r$ and $\alpha$ such that $\varphi$ is $r$-locally partially strategyproof (and therefore feasible), but not $r^{2-\varepsilon}$-partially strategyproof, i.e., statement \ref{item:construction:varphi:not_psp}. 
To see this, we let $\tilde{s}=s^{2-\varepsilon}$ and consider the preference order $a\succ b \succ c \succ d$ and the non-local misreport $c\succ a \succ b \succ d$. 
If $\varphi$ is $\tilde{r}$-partially strategyproof, then in particular we must have $\delta_3 \geq 0$ for this manipulation. 
However, extensive algebraic transformations yield
\begin{eqnarray}
	\delta_3 & = & \tilde{s}^2 \left(\alpha - 1 + \gamma_c + \gamma_d\right) + \tilde{s}\left(-\gamma_d\right) + \left(-\gamma_c \right) \\
		& = & (1-\alpha) \left(\frac{-s^{5-\varepsilon} + s^{5-2\varepsilon} +s^{3-\varepsilon} -1}{s^3-2s+1}\right).
\end{eqnarray}
Since the leading term with exponent $5-\varepsilon$ has negative sign, this value is negative for sufficiently large $s$, and this negativity of $\delta_3$ is independent of $\alpha$. 

In conclusion, given a value of $\varepsilon > 0$, we can find $r>0$ and $\alpha \in (0,1)$ such that the resulting mechanism $\varphi$ is feasible and $r$-locally partially strategyproof, but it is not $r^{2-\varepsilon}$-partially strategyproof. 

This concludes the proof of Theorem \ref{THM:LOCAL_SUFFICIENCY_PSP:TIGHT}.
\end{proof}
\end{document}